\pgfplotsset{compat=1.15} \usepackage{mathrsfs} \usetikzlibrary{arrows} 
\newtheorem{theorem}{Theorem}
\newtheorem{definition}{Definition}
\newtheorem{clm}{Claim}
\newtheorem{corollary}{Corollary}
\definecolor{xdxdff}{rgb}{0.5,0.5,1}
\begin{document}
\title{Nearly Tight Bounds on Approximate Equilibria in Spatial Competition on the Line}
%
%
\author[1]{Umang Bhaskar\thanks{umang@tifr.res.in}}
\author[1]{Soumyajit Pyne\thanks{soumyajit.pyne@tifr.res.in}}
\affil[1]{Tata Institute of Fundamental Research, Mumbai}
\maketitle              

\begin{abstract}

    In Hotelling's model of spatial competition, a unit mass of voters is distributed in the interval $[0,1]$ (with their location corresponding to their political persuasion), and each of $m$ candidates selects as a strategy his distinct position in this interval. Each voter votes for the nearest candidate, and candidates choose their strategy to maximize their votes. It is known that if there are more than two candidates, equilibria may not exist in this model. It was unknown, however, how close to an equilibrium one could get. Our work studies approximate equilibria in this model, where a strategy profile is an (additive) $\epsilon$-equilibria if no candidate can increase their votes by $\epsilon$, and provides tight or nearly-tight bounds on the approximation $\epsilon$ achievable.

    We show that for 3 candidates, for any distribution of the voters, $\epsilon \ge 1/12$. Thus, somewhat surprisingly, for any distribution of the voters and any strategy profile of the candidates, at least $1/12$th of the total votes is always left ``on the table.'' Extending this, we show that in the worst case, there exist voter distributions for which $\epsilon \ge 1/6$, and this is tight: one can always compute a $1/6$-approximate equilibria. We then study the general case of $m$ candidates, and show that as $m$ grows large, we get closer to an exact equilibrium: one can always obtain an $1/(m+1)$-approximate equilibria in polynomial time. We show this bound is asymptotically tight, by giving voter distributions for which $\epsilon \ge 1/(m+3)$.
\end{abstract}

\section{Introduction}

In elections, strategic positioning by candidates is a common phenomenon. Candidates try to estimate where voters lie on a particular issue, such as through polls, past experience, or media reports, and then adopt positions accordingly. Depending on the opinion of the voters, candidates may position themselves to appear more conservative, liberal, or moderate, often even contradicting their earlier or stated positions.

Hotelling's seminal model, introduced in 1929, has been very influential in studying such strategic positioning by candidates~\cite{Hotelling29}. The initial model by Hotelling studied two merchants selling an identical good that sought to attract customers distributed in the interval $[0,1]$. Each merchant chose both a strategic location and a price for the identical good, and each customer then chose to purchase from the merchant which minimized the sum of distance and price paid. The primary message of Hotelling's work was that, in such competition in duopolies, competition leads to \emph{minimum differentiation}, where at equilibrium the two competitors end up being very close to each other.

Downs adopted Hotelling's basic model to examine electoral competition, even in scenarios involving more than two candidates~\cite{Downs57}. Since then, the model, and variations of it, have been very influential in political economics. This is then a spatial competition among the candidates, where each candidate seeks to maximize the number of votes they receive, in the presence of the other candidates. With more than three candidates, however, there may not be an equilibrium (and in fact for three candidates there is never an equilibrium!)~\cite{EatonL75,Osborne}. 

Despite its shortcomings, equilibria is a fundamental and very appealing solution. However, exact equilibrium may be too demanding a notion and the non-existence of equilibria is not informative if one could get very close to an equilibrium. On the other hand, if any strategy profile is far from an equilibrium, then the game is highly unstable, and equilibrium may not be a suitable model for the behaviour of agents. In this work, we hence focus on the question of \emph{how close can one get to an equilibrium} or \emph{how stable the instance is} within the basic model of spatial competition motivated by Hotelling's work. We call a strategy profile an (additive) $\epsilon$-equilibrium if no candidate can increase their votes by $\epsilon$, and provide tight or nearly-tight bounds on the approximation $\epsilon$ achievable. Despite the vast literature on this model and extensions to it, we believe work on quantifying the instability in the absence of equilibria deserves more attention.

We note that in many practical applications where exact solutions do not exist or are difficult to obtain, additive approximate solutions are commonly used. An example of this from social choice theory is the use of EF1 (envy-free up to one good) allocations for indivisible goods, which is an additive approximation~\cite{UFMNW}. For mixed Nash equilibria in bimatrix (i.e., two-person normal-form) games, additive approximations are more commonly studied. In bimatrix games, for constant $\epsilon$, quasi-polynomial time algorithms giving an additive $\epsilon$-approximation to Nash equilibria were known prior to the seminal PPAD-completeness result~\cite{ChenD06,LGSS}. The PPAD-completeness result was immediately followed by the hardness of computing an additive $\epsilon$-approximation, for polynomially small $\epsilon$~\cite{CNEASC}.

\subsection{Related Work}

The paper closest to ours in studying spatial competition with multiple candidates is by Eaton and Lipsey~\cite{EatonL75}. Their objective is to study conditions in which the minimum differentiation shown by Hotelling holds. Among other results, they describe necessary and sufficient conditions for the existence of equilibria and show that if there are three candidates, then an equilibrium does not exist. For $m \ge 4$ candidates, they show that if $m$ exceeds twice the number of local maxima in the distribution of voters, there is no equilibrium. Hence in particular for a unimodal distribution of voters (such as a truncated Gaussian distribution), there does not exist an equilibrium for any number of candidates. They also study equilibria when the market is a circle, rather than the interval $[0,1]$, as well as the multidimensional setting when the customers are distributed in a disc. The existence of equilibrium is also studied by Fournier and Scarsini~\cite{FournierS19}, albeit in a more general model where the voters are present in a graph. They show that for \emph{uniform} distribution of voters, if the number of candidates is large enough, then an equilibrium exists, and they also study the inefficiency (i.e., the Price of Anarchy and Price of Stability) of equilibria.

Several variants of Hotelling's model are also studied. This literature is quite vast, and hence we give some pointers rather than a comprehensive survey. One variant studies the case of competition between parties, each of which can choose one of multiple candidates located on the line~\cite{HarrensteinLST21,DeligkasEG22}. The goal of each party is to maximize their votes, given the candidates chosen by other parties. Other studied variations include models where candidates can enter and exit the election~\cite{FedersenSW90,SenguptaS08}, when candidates strategize to win the election, rather than maximizing their votes~\cite{ChisikL06}, when some candidates have fixed positions and do not strategize~\cite{JonesSF22}, and when voters do not vote at all when there is no candidate sufficiently close to their position~\cite{FeldmanFO16,Shen,JonesSF22}. For a more detailed survey of these results, we refer to~\cite{EiseltMD19,EnelowH90}.

\subsection{Our Contribution}
We first focus on the case of approximate equilibria with 3 candidates. This is the simplest case beyond 2 candidates and is interesting because, as shown earlier, there is no equilibrium irrespective of the distribution of voters~\cite{EatonL75}. We first show tight bounds on the distance of any strategy profile from equilibria, for any voter distribution. We show that for \emph{any} distribution, one cannot obtain better than a $1/12$-approximate equilibrium. Thus for any distribution of the voters, and for any location of the three candidates, some candidate can increase his votes by at least $1/12$ (or $8.5\%$ of the total vote) by choosing a different position. Given that modern elections often hinge on a small percentage of the total vote, our findings suggest that any scenario involving three candidates is highly \emph{unstable}. We also show that this is the worst possible: there exists a distribution where one can in fact obtain a $1/12$-approximate equilibrium.

The bound of $1/12$ is true for all distributions; an immediate question then is about the worst-case voter distribution. We next show that there exists a voter distribution where one cannot obtain better than a $1/6$-approximate equilibrium. 
We also show that the approximation $1/6$ is tight, showing that for any distribution, one can obtain a $1/6$-approximate equilibrium.
\bgroup
\def\arraystretch{1.5}
\setlength\tabcolsep{0.1in}
\begin{table}
\begin{center}
\begin{tabular}{|l|c|c|c|c|}
\hline
& \multicolumn{2}{|c|}{$m=3$ candidates} & \multicolumn{2}{|c|}{$m \ge 4$ candidates} \\ \hline
& $\exists$ distrib. & $\forall$ distribs. & $\exists$ distrib. & $\forall$ distribs. \\ \hline 
Lower bound & $\epsilon \ge 1/6$ & $\epsilon \ge 1/12$ & $\epsilon \ge 1/(m+3)$ & 0 \\ \hline 
Upper bound & $\epsilon \le 1/12$ & $\epsilon \le 1/6$ & 0 & $\epsilon \le 1/(m+1)$ \\ \hline 
\end{tabular}
\end{center}
\caption{Our results on the additive approximation achievable. The ``$\exists$ distrib." columns show results for worst-case (for lower bounds) / best-case (for upper bounds) distributions, and the ``$\forall$ distribs." columns show results over all distributions. Our bounds are tight for 3 candidates, and asymptotically tight for 4 or more candidates.\label{table:results1}}
\end{table}
\egroup
Moving beyond 3 candidates, we then consider approximate equilibria for $m$ candidates. It is known that there are distributions for which equilibria exist for $m \ge 4$ candidates~\cite{EatonL75}, hence we are concerned with worst-case distributions. Here, we again show nearly tight bounds. Specifically, we show that for any distribution, one can obtain a $\frac{1}{m+1}$-approximate equilibrium, and there exist distributions for which one cannot obtain better than a $\frac{1}{m+3}$-approximate equilibrium. Thus our results present a quantitative perspective on previous results on the nonexistence of equilibria.  The upper bounds of our model indicate that as the number of candidates increases, the model becomes increasingly \emph{stable}, regardless of the distribution of voters. Overall our results present tight or nearly tight bounds on approximate equilibria. Our upper bounds are via polynomial time algorithms that have access to an oracle that supports two functions: (i) given $x \in [0,1]$, returns $F(x)$, the total voters until the location $x$, and (ii) given $x, v \in [0,1]$, returns a location $y = \text{Cut}(x,v) \ge x$ so that there are exactly $v$ voters in the interval $[x,y]$.\footnote{Note that the second query --- \text{Cut}($x$,$v$) --- can be simulated to arbitrary precision by running a binary search using just the first query.} These are the same as Eval and Cut queries used in the Robertson-Webb query model for fair cake division~\cite{RobertsonW98}. It's worth mentioning that the simplicity of our algorithm in determining the upper bound of $\frac{1}{m+1}$ indicates that a simpler access method suffices. Specifically, an oracle that provides the positions of the $m$th quantiles of the voter distribution for any $m$ suffices. This is similar to the oracle utilized by the GLIME mechanism~\cite{GLIME}.

Lastly, we note that our results are for the case where candidates may occupy locations near each other, but cannot occupy the same location in the $[0,1]$ interval. A natural question then is if the results change if multiple candidates are allowed to occupy the same location. This could be interpreted as voters being unable to perfectly distinguish between candidates that are very near to each other. We show this latter model may give different results. Specifically, for scenarios involving three candidates, we show that regardless of the distribution of voters, a $1/7$-approximate equilibrium can be achieved, thus offering greater stability compared to the previous model, for which there exist distributions for which $\epsilon$ must be at least $1/6$. Furthermore, this bound is tight, as there exists a distribution where obtaining a better approximation than $1/7$ is impossible. We leave further study in this model for future work.

\bgroup
\def\arraystretch{1.5}
\setlength\tabcolsep{0.1in}
\begin{table}
\begin{center}
\begin{tabular}{|l|c|c|}
\hline
& \multicolumn{2}{|c|}{$m=3$ candidates} \\ \hline
& $\exists$ distrib. & $\forall$ distribs. \\ \hline 
Lower bound & $\epsilon \ge 1/7$ & ?  \\ \hline 
Upper bound & ? & $\epsilon \le 1/7$  \\ \hline 
\end{tabular}
\end{center}
\caption{Our results on the additive approximation achievable for the variant where multiple candidates can occupy the same location. Question marks indicate open questions. Note the better approximation achievable, as compared to Table~\ref{table:results1}.\label{table:results2}}
\end{table}
\egroup

\section{Preliminaries and Notation}

A unit mass of voters is distributed in the interval $[0,1]$ according to a density function $f:[0,1]\rightarrow\mathbb{R^+}\cup \{0\}$ such that $f$ is integrable, bounded so that $f(z) \le M$ for some finite $M$, and $\int_0^1 f(z) \, dz = 1$. Let $F(y) = \int_0^y f(z) \, dz$ be the integral. Since $f(z) \le M$, for any interval of length $\delta$, the total voters in the interval is at most $M \delta$.

There are $m$ candidates. Each candidate $i$ chooses a real number $x_i$ from $[0,1]$ as their strategy, hence $X=(x_1,x_2,...,x_m)$ is a strategy profile. We restrict the candidates to occupy distinct positions in the interval, so that $\min_{i,j} |x_i - x_j|$ $\ge \delta$ for some small $\delta$. It is helpful to think of $\delta$ as approaching zero. In particular, we will assume $M \delta < 10^{-3}$. For a strategy profile $X$, we will use $X_{-i}$ to denote the position of candidates other than $i$.

Given a strategy profile $X$, assume without loss of generality that $x_1 \le x_2 \le \ldots \le x_m$.\footnote{We will make this assumption whenever possible to avoid cumbersome notation.} Then for a candidate $1<i<m$, the voters located between $(x_{i-1} + x_i)/2$ on the left and $(x_i + x_{i-1})/2$ on the right are nearer candidate $i$ than any other candidate $j$, and will hence vote for $i$. We then define the \emph{left} and \emph{right votes} (or \emph{utility}) of candidate $i$ as 

\[
 U_i^L(X) = \int\limits_{\frac{x_{i-1} + x_i}{2}}^{x_i} f(z) \, dz\, , ~\text{and}~ U_i^R(X) = \int\limits_{x_i}^{\frac{x_{i} + x_{i+1}}{2}} f(z) \, dz \, .
\]

\noindent We need to define the utilities separately for candidate 1 and $m$.
\[
 U_1^L(X) = \int\limits_{0}^{x_1} f(z) \, dz\, = \, F(x_1) , ~\text{and}~ U_1^R(X) = \int\limits_{x_1}^{\frac{x_{1} + x_{2}}{2}} f(z) \, dz \, .
\]
\[
 U_m^L(X) = \int\limits_{\frac{x_{m-1} + x_m}{2}}^{x_m} f(z) \, dz\, , ~\text{and}~ U_m^R(X) = \int\limits_{x_m}^{1} f(z) \, dz\, = \, 1-F(x_m) .
\]

\noindent The total votes or utility of candidate $i\in[m]$, is $U_i(X) = U_i^L(X) + U_i^R(X)$. Note that the total votes over all candidates is always 1.

A strategy profile $X$ is an equilibrium if for all candidates $i$, and all locations $x'$ that satisfy $|x' - x_j| \ge \delta$ for all $j \neq i$, $U_i(X) \ge U_i(x', X_{-i})$. It is known that for two candidates, the positions $x_1 = \mu - \delta/2$, $x_2 = \mu + \delta/2$ are an equilibrium, where $\mu$ is the median (i.e., $F(\mu)=1/2$). However, for three candidates, no equilibrium exists. Our goal in this paper is to study approximate equilibria.

\begin{definition}[$\epsilon$-equilibrium]
    Given an $\epsilon \ge 0$, $X = (x_1,x_2,...,x_m)$ is an $\epsilon$-equilibrium if for any candidate $i\in[m]$, any location $x_i'\in[0,1]$, and $\forall j\neq i$, $|x_i'-x_j|\geq\delta$  \[\lim_{\delta\to0} (U_{i}(X')-U_{i}(X))\leq\epsilon\] where $X'=(x_i',X_{-i})$.
\end{definition}
Note that $X$ is a \textit{pure Nash equilibrium} if the condition is satisfied with $\epsilon=0$.

For our algorithms to compute equilibria, in order to access the voter density function $f$, we will assume we have access to an oracle that supports the following queries:
\begin{itemize}
    \item $F(z)$: Returns $F(z)$, the total voters in the interval $[0,z]$.
    \item Cut($z,v$): Given a location $z \in [0,1]$ and a value $v \in [0,1]$, returns a location $y \ge z$ so that $F(y) - F(z) = v$, or returns $1$ if there is no such $y$. If there are multiple such locations, return one arbitrarily but consistently.
\end{itemize}

\section{Approximate Equilibria for Three Candidates}

In this section, we are interested in three candidates, i.e., $m=3$. Each candidate $i\in\{1,2,3\}$ chooses a real number $x_i$ from $[0,1]$ as their strategy such that $x_1<x_2<x_3$. We begin by showing a lower bound of $1/12$ on $\epsilon$.

\begin{theorem}\label{lb112}
    If $X = (x_1,x_2,x_3)$ is an $\epsilon$-equilibrium, then $\epsilon\geq\frac{1}{12} - M \delta$.
\end{theorem}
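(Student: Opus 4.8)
The plan is to exhibit a small, explicit collection of deviations—one for each outer candidate and four for the middle candidate—and combine the resulting inequalities to force $\epsilon$ to be large. Writing $a = U_1(X)$, $b = U_2(X)$, $c = U_3(X)$ with $a+b+c=1$, I will bound each of these three masses in terms of $\epsilon$ and then add everything up. All gains below are first computed in the idealized $\delta \to 0$ limit, where a boundary sits exactly at a midpoint, and I will only afterwards thread through the $O(M\delta)$ corrections that produce the stated slack.

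First I would \emph{squeeze} the middle candidate. Candidate $1$ can relocate to $x_2 - \delta$, remaining leftmost but now capturing everything up to the midpoint $x_2 - \delta/2$; in the limit this captures $F(x_2)$, a gain of exactly $U_2^L(X)$. Symmetrically, candidate $3$ moving to $x_2 + \delta$ gains $U_2^R(X)$. Since in an $\epsilon$-equilibrium neither gain exceeds $\epsilon$, I obtain $b = U_2^L + U_2^R \le 2\epsilon$. Next I would let candidate $2$ \emph{escape} outward: moving just left of $x_1$ makes it leftmost and captures $\approx F(x_1) = U_1^L$, for gain $U_1^L - b$, while moving just right of $x_3$ gains $U_3^R - b$; these give $U_1^L + U_3^R \le 2b + 2\epsilon$. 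Finally, candidate $2$ can \emph{relocate inside} $(x_1,x_3)$: placed just right of $x_1$ it captures the window $[x_1,(x_1+x_3)/2]$ of mass $F((x_1+x_3)/2) - F(x_1)$, and placed just left of $x_3$ it captures $[(x_1+x_3)/2, x_3]$. The key observation is that these two windows tile $[x_1,x_3]$, so their masses sum to $F(x_3)-F(x_1) = U_1^R + b + U_3^L$; bounding each of the two gains by $\epsilon$ and adding yields $U_1^R + U_3^L \le 2\epsilon + b$.

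Combining everything gives $1 = b + (U_1^L + U_3^R) + (U_1^R + U_3^L) \le b + (2b + 2\epsilon) + (2\epsilon + b) = 4b + 4\epsilon \le 12\epsilon$, where the last step uses $b \le 2\epsilon$; hence $\epsilon \ge 1/12$. For the $\delta$-bookkeeping I would note that each idealized capture above differs from the actual one only by the mass in an interval of length at most $\delta$ near a shifted midpoint, which is at most $M\delta$ since $f \le M$. Propagating these corrections through the four inequalities replaces $1 \le 12\epsilon$ by $1 \le 12\epsilon + O(M\delta)$, yielding $\epsilon \ge 1/12 - M\delta$. I also need to dispose of the boundary cases where $x_1 < \delta$ or $x_3 > 1-\delta$, which can make the ``escape'' moves infeasible; but in exactly those cases the relevant mass ($U_1^L$ or $U_3^R$) is itself at most $M\delta$, so the corresponding inequality holds trivially and is absorbed into the same $M\delta$ term.

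I expect the main obstacle to be \emph{selecting} precisely this set of deviations rather than the verification, which is routine. Bounding $a$ and $c$ naively—say, using only the two ``escape'' moves together with a single crude interior move—gives a weaker constant such as $1/14$. It is the use of the \emph{two} symmetric interior relocations of candidate $2$, which together pin down the \emph{sum} $U_1^R + U_3^L$ (via the tiling of $[x_1,x_3]$) rather than each term separately, that sharpens the final constant to exactly $1/12$.
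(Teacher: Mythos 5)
Your proof is correct and follows essentially the same route as the paper: the same six deviations (candidates 1 and 3 squeezing candidate 2 to bound $U_2 \le 2\epsilon$, and candidate 2 jumping to just outside $x_1$, just outside $x_3$, and just inside each of $x_1$ and $x_3$) yield the same four bounds of roughly $3\epsilon$ each on $F(x_1)$, $F((x_1+x_3)/2)-F(x_1)$, $F(x_3)-F((x_1+x_3)/2)$, and $1-F(x_3)$, merely regrouped in your write-up as $b + (U_1^L+U_3^R) + (U_1^R+U_3^L)$. The $M\delta$ bookkeeping and the boundary-case remark are consistent with (indeed slightly more careful than) the paper's treatment.
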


The proof of the theorem is based on three claims, the first two of which bound the left and right votes for each candidate, and the last of which bounds the votes between $x_1$ and $x_3$.

\begin{clm}\label{mid}
$U_2^L(X)$ and $U_2^R(X)$ are both at most $\epsilon + M \delta$.
\end{clm}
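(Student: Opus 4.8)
The plan is to prove each bound by exhibiting a profitable deviation for a neighboring candidate that captures almost all of candidate 2's votes on one side. Consider the bound on $U_2^L(X)$. The key observation is that candidate 1, currently the leftmost candidate, can relocate to a position immediately to the left of candidate 2, say $x' = x_2 - \delta$. This move respects the separation constraint (the new position is at distance $\delta$ from $x_2$ and more from $x_3$), and it keeps candidate 1 as the leftmost candidate in the new profile $X' = (x', x_2, x_3)$.

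First I would compute the two utilities explicitly. In the original profile, candidate 1 collects all voters up to the midpoint with candidate 2, so $U_1(X) = F\!\left(\frac{x_1+x_2}{2}\right)$. After the deviation, the midpoint between candidate 1 and candidate 2 is $\frac{x'+x_2}{2} = x_2 - \delta/2$, so candidate 1 now collects $U_1(X') = F(x_2 - \delta/2)$. Subtracting gives
\[
U_1(X') - U_1(X) = \int_{\frac{x_1+x_2}{2}}^{x_2 - \delta/2} f(z)\,dz = U_2^L(X) - \int_{x_2 - \delta/2}^{x_2} f(z)\,dz.
\]
The subtracted tail integral is over an interval of length $\delta/2$, so by the density bound $f \le M$ it is at most $M\delta/2 \le M\delta$. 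Hence the gain is at least $U_2^L(X) - M\delta$. Since $X$ is an $\epsilon$-equilibrium, this gain is at most $\epsilon$ (in the $\delta \to 0$ limit), which rearranges to $U_2^L(X) \le \epsilon + M\delta$.

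The bound on $U_2^R(X)$ follows by the symmetric deviation: candidate 3, the rightmost candidate, moves to $x' = x_2 + \delta$, capturing all voters to the right of $x_2 + \delta/2$ and thereby gaining $U_2^R(X) - \int_{x_2}^{x_2+\delta/2} f \ge U_2^R(X) - M\delta$, which again is at most $\epsilon$ by the equilibrium condition. I expect the routine part to be the explicit utility bookkeeping, and the only point requiring genuine care is verifying that the deviating candidate remains on the same (outer) side of candidate 2 after the move, so that the boundary terms are exactly the length-$\delta/2$ slivers and the $M\delta$ slack is correctly charged; the choice of deviating to distance exactly $\delta$ from $x_2$ is what makes this clean.
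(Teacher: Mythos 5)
Your proposal is correct and follows essentially the same argument as the paper: candidate 1 deviates to $x_2 - \delta$ (respectively candidate 3 to $x_2 + \delta$), capturing candidate 2's left (right) votes up to an $O(M\delta)$ sliver, which contradicts the $\epsilon$-equilibrium condition unless $U_2^L(X) \le \epsilon + M\delta$. The only cosmetic difference is that you count the new votes up to the midpoint $x_2 - \delta/2$ while the paper lower-bounds the new utility by $F(x_2-\delta)$ alone; both yield the same bound.
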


\begin{proof}
For a contradiction, assume $U_2^L(X) > \epsilon + M\delta$. Now let $x_1' = x_2 - \delta$. Then candidate 1 will get its earlier votes and the left votes of candidate 2 minus at most $M \delta$, resulting in an increase of more than $\epsilon$, which is a contradiction. That is, if $X' = (x_1', x_2, x_3)$, then 

\begin{align*}
    U_1^L(X') &=  F(x_2 - \delta) \\
        & = F((x_1+x_2)/2) + F(x_2 - \delta) - F((x_1+x_2)/2) \\
        & = U_1(X) + F(x_2 - \delta) - F((x_1+x_2)/2) \\
        & \ge U_1(X) + F(x_2) - M \delta - F((x_1+x_2)/2) \\
        & = U_1(X) + U_2^L(X) - M \delta \\
        & > U_1(X) + \epsilon + M \delta - M \delta  = U_1(X)+\epsilon\, .
\end{align*}

\noindent Hence shifting to $x_2 - \delta$ increases candidate 1's votes by more than $\epsilon$, which is a contradiction. By a similar argument, we get that $U_2^R(X) \le \epsilon + M\delta$ as well. 
\end{proof}

\begin{clm}\label{extreme}
$U_1^L(X)$ and $U_3^R(X)$ are both at most $ 3(\epsilon + M \delta)$.
\end{clm}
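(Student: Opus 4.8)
The plan is to bound the extreme left votes $U_1^L(X)$ by combining two constraints: one coming from candidate $1$'s ability to jump to the right of candidate $3$, and one coming from candidate $3$'s ability to jump to the left of candidate $1$. The key idea is that if the extreme votes $U_1^L(X)$ were large, then the ``outermost'' region of the distribution is vote-rich, and some candidate could profitably relocate into it.

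First I would set up the deviation for candidate $1$: consider moving candidate $1$ to the immediate left of candidate $3$, at $x_1' = x_3 - \delta$. In the profile $X' = (x_2, x_1', x_3)$ (after reordering), candidate $1$'s new utility is its left votes up to $x_3$ minus the portion lost near $x_3$, so roughly $U_1(X') \approx U_2^R(X) + U_3^L(X) - O(M\delta)$. Since $X$ is an $\epsilon$-equilibrium, this gain is at most $\epsilon$, giving an upper bound on $U_2^R(X) + U_3^L(X) - U_1(X)$. The symmetric deviation for candidate $3$ (jumping to $x_1 + \delta$) bounds $U_2^L(X) + U_1^R(X) - U_3(X)$. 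I would then combine these with Claim~\ref{mid}, which already tells me $U_2^L(X)$ and $U_2^R(X)$ are each at most $\epsilon + M\delta$, so the ``middle'' contributions in these inequalities are controlled.

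The cleanest route, I expect, is a direct deviation that targets the extreme region itself. To bound $U_1^L(X)$, I would have candidate $2$ (or $3$) jump just to the left of candidate $1$, to position $x_1 - \delta$, capturing essentially all of $U_1^L(X) = F(x_1)$ as its new left votes, while retaining nothing it strictly needs. That candidate's current utility is at most its full share, but the relevant comparison is against what it abandons. Writing out the utility of the deviating candidate at the new position and subtracting its current utility, the $\epsilon$-equilibrium condition should yield that $F(x_1)$ minus a combination of already-bounded middle votes is at most $\epsilon + O(M\delta)$. Chaining these, and using that each of the middle terms is bounded by $\epsilon + M\delta$ via Claim~\ref{mid}, I would arrive at $U_1^L(X) \le 3(\epsilon + M\delta)$; the factor $3$ presumably arises from accumulating three $(\epsilon + M\delta)$-sized terms along the deviation chain. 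The symmetric argument handles $U_3^R(X)$.

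The main obstacle will be choosing the \emph{right} deviating candidate and target location so that the gain expression isolates $U_1^L(X)$ cleanly while only picking up terms already controlled by Claim~\ref{mid}, rather than uncontrolled quantities like $U_3^L(X)$ or $U_1^R(X)$ that would require further bounding. I would need to check carefully which candidate's relocation exposes $F(x_1)$ most directly, and ensure that the midpoints shift favorably (moving candidate $2$ leftward past $x_1$ redraws the boundary between the remaining candidates, and I must verify this boundary movement does not erase the captured votes). Getting the bookkeeping of the $M\delta$ error terms and the midpoint boundaries correct is the delicate part; the strategic idea is straightforward.
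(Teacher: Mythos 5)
Your ``direct deviation'' route is exactly the paper's proof: candidate 2 moves to $x_1 - \delta$, gains at least $F(x_1) - M\delta$ as the new leftmost candidate while abandoning a current utility of at most $2(\epsilon + M\delta)$ by Claim~\ref{mid}, and the $\epsilon$-equilibrium condition then forces $F(x_1) \le 3(\epsilon + M\delta)$. The one point you leave open --- which candidate should deviate --- is settled by your own criterion: it must be candidate 2, the unique candidate whose abandoned utility is already controlled (candidate 3's is not, and candidate 1 moving would forfeit the very votes being counted).
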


\begin{proof}
From Claim~\ref{mid}, candidate 2 has utility at most $2(\epsilon + M \delta)$. Now if $U_1^L(X) > 3(\epsilon + M \delta)$, let $x_2' = x_1 - \delta$, and consider the strategy profile $X' = (x_1, x_2', x_3)$. Then candidate 2's vote is at least $F(x_1 - \delta) 
  \ge U_1^L(X) - M \delta$, and since $U_1^L(X) > 3(\epsilon + M \delta)$, this is greater than $3\epsilon + 2 M \delta$. Hence, $X$ cannot be an $\epsilon$-equilibrium, giving a contradiction. Similarly, we obtain $U_3^R(X) \leq 3(\epsilon+M \delta)$.
\end{proof}

\begin{clm}\label{midtot}
$F((x_1+x_3)/2) - F(x_1)$ and $F(x_3) - F(x_1+x_3)/2)$ are both at most $3(\epsilon + M \delta)$.
\end{clm}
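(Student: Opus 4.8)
The plan is to prove both bounds by exhibiting a profitable deviation for candidate $2$, mirroring the contradiction arguments used in Claims~\ref{mid} and~\ref{extreme}. The key observation is that candidate $2$ is the \emph{middle} candidate, so if we relocate it to sit immediately to the right of candidate $1$, then candidate $3$ (still at $x_3$) becomes its right neighbour, and the dividing point between the relocated candidate $2$ and candidate $3$ falls at exactly $(x_1+x_3)/2$. Consequently the relocated candidate captures essentially all of the mass in $[x_1,(x_1+x_3)/2]$, which is precisely the quantity we wish to bound.

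Concretely, to bound $F((x_1+x_3)/2) - F(x_1)$, I would suppose for contradiction that it exceeds $3(\epsilon + M\delta)$ and set $x_2' = x_1 + \delta$, forming $X' = (x_1, x_2', x_3)$. This is a legal move since the gaps between the original candidates are at least $\delta$, so all pairwise distances stay at least $\delta$ and positions remain in $[0,1]$. In $X'$ candidate $2$'s left boundary is $x_1 + \delta/2$ and its right boundary is $(x_1+x_3)/2 + \delta/2$, so
\[
U_2(X') = F\!\left(\tfrac{x_1+x_3}{2} + \tfrac{\delta}{2}\right) - F\!\left(x_1 + \tfrac{\delta}{2}\right) \ge F\!\left(\tfrac{x_1+x_3}{2}\right) - F(x_1) - M\delta ,
\]
using monotonicity of $F$ together with $F(x_1 + \delta/2) \le F(x_1) + M\delta$. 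Under the contradiction hypothesis this is strictly greater than $3\epsilon + 2M\delta$.

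Now I would invoke Claim~\ref{mid}, which gives $U_2(X) \le 2(\epsilon + M\delta)$, so the gain $U_2(X') - U_2(X)$ exceeds $3\epsilon + 2M\delta - 2(\epsilon + M\delta) = \epsilon$, contradicting the $\epsilon$-equilibrium condition. The bound on the right half $F(x_3) - F((x_1+x_3)/2)$ follows by the symmetric deviation $x_2' = x_3 - \delta$, whose left boundary is $(x_1+x_3)/2 - \delta/2$ and right boundary is $x_3 - \delta/2$, giving $U_2(X') \ge F(x_3) - F((x_1+x_3)/2) - M\delta$ and the identical contradiction.

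I expect no serious obstacle here; the only care needed is (i) correctly choosing the deviating candidate --- it must be candidate $2$ rather than $1$ or $3$, precisely so that the survivor on the far side of the target interval is candidate $3$ (resp.\ candidate $1$) at distance $x_3 - x_1$, making the new dividing point land at the midpoint --- and (ii) bookkeeping the $O(M\delta)$ truncation errors so that the net gain comes out to strictly more than $\epsilon$. The factor $3$ on the right-hand side is exactly what is needed to absorb the $2(\epsilon+M\delta)$ that candidate $2$ already holds plus the required surplus of $\epsilon$.
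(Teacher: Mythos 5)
Your proof is correct and follows exactly the route the paper intends: the paper gives only a one-line sketch for this claim ("shift candidate 2 to $x_1+\delta$ or $x_3-\delta$ and invoke the bound $U_2(X)\le 2(\epsilon+M\delta)$ from Claim~\ref{mid}"), and your argument fills in precisely those details with the bookkeeping done correctly.
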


The proof is similar to Claim~\ref{extreme}. We show that if the claim does not hold, then for candidate 2, shifting to either $x_1 + \delta$ or $x_3- \delta$ increases his votes by more than $\epsilon$.

We now complete the proof of the theorem. Figure~\ref{fig:firstproof} shows the bounds on the votes from Claim~\ref{mid} and Claim~\ref{extreme}. 

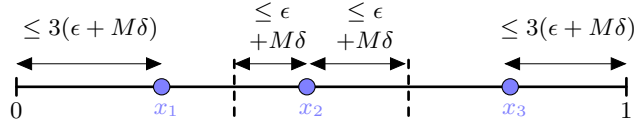
\begin{figure}[H]
    \centering
    \begin{tikzpicture}[line cap=round,line join=round,>=triangle 45,x=1cm,y=1cm,scale=0.382]
    \clip(-1,-3) rectangle (22,4);
\draw [dashed,line width=1pt] (13.5,-1) -- (13.5,1);
\draw [dashed,line width=1pt] (7.5,-1) -- (7.5,1);
\draw [line width=1pt] (21,-0.3) -- (21,0.3);
\draw [line width=1pt] (0,-0.3) -- (0,0.3);
\draw [line width=1pt] (0,0) -- (21,0);
\draw [line width=0.3pt,<->] (0,0.8) -- (5,0.8);
\draw [line width=0.3pt,<->] (16.8,0.8) -- (21,0.8);
\draw [line width=0.3pt,<->] (7.5,0.8) -- (10,0.8);
\draw [line width=0.3pt,<->] (10.1,0.8) -- (13.4,0.8);
\begin{footnotesize}
\draw [fill=xdxdff] (5,0) circle (8pt);
\draw [fill=xdxdff] (17,0) circle (8pt);
\draw [fill=xdxdff] (10,0) circle (8pt);
\draw[color=xdxdff] (10.16,-0.8) node {$x_2$};
\draw[color=xdxdff] (5.16,-0.8) node {$x_1$};
\draw[color=xdxdff] (17.16,-0.8) node {$x_3$};
\draw[color=black] (0,-0.8) node {$0$};
\draw[color=black] (21,-0.8) node {$1$};
\draw[color=black] (2.55,2) node {$\le 3(\epsilon + M \delta)$};
\draw[color=black] (8.8,2.6) node {$\le \epsilon $};
\draw[color=black] (9,1.6) node {$+ M\delta$};
\draw[color=black] (12,2.6) node {$\le \epsilon $};
\draw[color=black] (12,1.6) node {$+M \delta$};
\draw[color=black] (19,2) node {$\le 3(\epsilon + M \delta)$};
\end{footnotesize}
\end{tikzpicture}
    \caption{Figure showing bounds on the votes, as shown by Claim~\ref{mid} and Claim~\ref{extreme}. Candidates are shown by blue circles. The dashed lines are the mid-points of $x_1,x_2$ and $x_2, x_3$.}
    \label{fig:firstproof}
\end{figure}

\begin{proof}[of Theorem~\ref{lb112}]
        We can write the total votes as 
        
        \begin{align}
            1 & = F(x_1) + \left( F((x_1+x_3)/2) - F(x_1) \right) \nonumber \\
                & \qquad + \left(F(x_3) - F((x_1+x_3)/2)\right) + \left(1 - F(x_3)\right) \label{eqn:sum} \, .
        \end{align}

        Now, from Claim~\ref{extreme} we get that $F(x_1) = U_1^L(X) \le 3(\epsilon + M \delta)$, and $1-F(x_3) = U_3^R(X) \le 3( \epsilon + M \delta)$. From Claim~\ref{midtot} we get that the remaining two terms are also at most $3(\epsilon + M \delta)$. Substituting in~\eqref{eqn:sum}, we get that $1 \le 12 (\epsilon + M \delta)$, or $\epsilon \ge \frac{1}{12} - M \delta$, as required.
\end{proof}

We now show that the bound in Theorem~\ref{lb112} is tight: we can't obtain a worse bound than $1/12$ that holds for all distributions. In particular, we exhibit a distribution for which a $1/12$-equilibrium is obtainable.

\begin{theorem}\label{112ex}
    There exists a distribution $f$ of voters for which there exists a $\frac{1}{12}$-equilibrium.
\end{theorem}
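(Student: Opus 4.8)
The plan is to exhibit an explicit distribution together with a symmetric strategy profile and to verify directly that no candidate can improve by more than $1/12$. I would be guided by the equality case of Theorem~\ref{lb112}: tightness forces each of the four regions $[0,x_1]$, $[x_1,(x_1+x_3)/2]$, $[(x_1+x_3)/2,x_3]$, $[x_3,1]$ to carry mass exactly $1/4$, while candidate $2$'s two sides each carry $1/12$. This suggests concentrating the voter mass at six narrow, mirror-symmetric spikes, each realized as a bump of bounded height so that $f\le M$. Concretely I would place mass $1/4,\,1/6,\,1/12$ at positions $1/8,\,3/10,\,2/5$ and the mirror masses $1/12,\,1/6,\,1/4$ at $3/5,\,7/10,\,7/8$, and take the profile $x_1=1/4,\ x_2=1/2,\ x_3=3/4$. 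Since the $\epsilon$-equilibrium condition takes $\delta\to0$, the bump widths can be fixed (far narrower than the inter-spike spacing) and the residual $O(M\delta)$ terms vanish in the limit.

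First I would record the utilities. The bisectors are $(x_1+x_2)/2=3/8$ and $(x_2+x_3)/2=5/8$, both lying strictly between consecutive spikes, so candidate $1$ captures the spikes at $1/8,3/10$ (mass $5/12$), candidate $2$ captures $2/5,3/5$ (mass $1/6$), and candidate $3$ captures $7/10,7/8$ (mass $5/12$); thus $U_1=U_3=5/12$ and $U_2=1/6$. Consequently, proving $\epsilon\le 1/12$ reduces to showing that every deviation of candidate $1$ (and by symmetry candidate $3$) captures at most $1/2$, and every deviation of candidate $2$ captures at most $1/4$.

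Next I would run the case analysis on best responses, using that a deviating candidate faces the two stationary candidates and so either becomes extreme (capturing an open interval bounded by one neighbour) or sits in a gap (capturing a sliding window whose width is half the gap). For candidate $2$: as leftmost it captures only the spike at $1/8$ (mass $1/4$, since $3/10>x_1=1/4$); as an interior candidate between $x_1$ and $x_3$ its window has width $(x_3-x_1)/2=1/4$, which can cover a consecutive pair such as $\{3/10,2/5\}$ or $\{3/5,7/10\}$ (mass $1/4$) but never three consecutive spikes, since $3/5-3/10=3/10>1/4$; rightmost is symmetric. Hence candidate $2$ never exceeds $1/4$. For candidate $1$: placing it just left of $x_2$ makes it leftmost with bisector $\approx 1/2$, capturing $1/8,3/10,2/5$ (mass $1/2$), a gain of exactly $1/12$; being rightmost captures only $7/8$ (mass $1/4$), and sitting in $(x_2,x_3)$ gives a window of width $1/8$ capturing at most $\{3/5,7/10\}$ (mass $1/4$), both of which lose votes relative to $5/12$. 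Symmetry disposes of candidate $3$.

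The main obstacle is precisely the interior ``sliding window'' deviation, since there the captured set is not a fixed neighbourhood of a stationary candidate; the whole construction hinges on spacing the spikes so that this width-$1/4$ window can straddle at most two of them (the inequality $3/5-3/10>1/4$), while still permitting the intended deviations to reach exactly $1/4$ for candidate $2$ and exactly $1/2$ for candidate $1$, matching the bound forced by Theorem~\ref{lb112}. A secondary point I would make explicit is the passage from idealized point masses to an admissible bounded density: choosing bump widths much smaller than the gaps between spikes keeps every relevant bisector strictly between spikes, so no deviation ever splits a spike, and the leftover $M\delta$ error disappears under the $\delta\to0$ limit in the definition of $\epsilon$-equilibrium.
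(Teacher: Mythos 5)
Your proposal is correct and follows essentially the same approach as the paper: a symmetric distribution with the middle candidate at the median receiving $1/6$ and the outer candidates $5/12$ each, so that the outer candidates' deviations are capped at $1/2$ and the middle candidate's sliding window of width $(x_3-x_1)/2$ is capped at $1/4$. The only real difference is the witness density (six narrow bumps versus the paper's continuous sawtooth with candidates at $3/22, 1/2, 19/22$); your spacing inequality $3/5-3/10>1/4$ plays exactly the role of the paper's check that no interval of length $(x_3-x_1)/2$ contains more than $1/4$ of the votes.
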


\begin{proof}
    Fix $\epsilon = 1/12$. We define the following symmetric distribution of voters:
         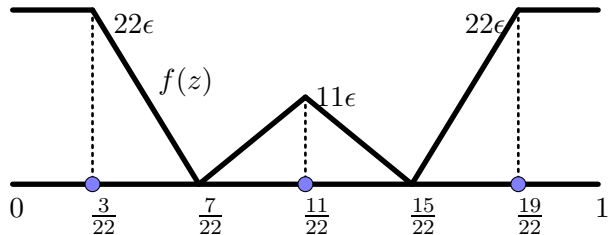
\begin{figure}[H]
         \centering
          \begin{tikzpicture}[line cap=round,line join=round,>=triangle 45,x=1cm,y=0.3cm,scale=0.35] \clip(-1,-8) rectangle (23,23); \draw [line width=2pt] (0,0)-- (22,0); \draw [line width=2pt] (0,22)-- (3,22); \draw [line width=2pt] (3,22)-- (7,0); \draw [line width=2pt] (7,0)-- (11,11); \draw [line width=2pt] (11,11)-- (15,0); \draw [line width=2pt] (15,0)-- (19,22); \draw [line width=2pt] (19,22)-- (22,22); \draw [dotted,line width=1pt] (19,22)-- (19,0); \draw [dotted,line width=1pt] (11,11)-- (11,0); \draw [dotted,line width=1pt] (3,22)-- (3,0); \draw (-0.5,-0.5) node[anchor=north west] {$0$}; \draw (2.5,-0.5) node[anchor=north west] {$\frac{3}{22}$}; \draw (3.4,22.8) node[anchor=north west] {$22\epsilon$}; \draw (16.6,22.8) node[anchor=north west] {$22\epsilon$}; \draw (11,13.5) node[anchor=north west] {$11\epsilon$}; \draw (6.5,-0.5) node[anchor=north west] {$\frac{7}{22}$}; \draw (10.5,-0.5) node[anchor=north west] {$\frac{11}{22}$}; \draw (14.5,-0.5) node[anchor=north west] {$\frac{15}{22}$}; \draw (18.5,-0.5) node[anchor=north west] {$\frac{19}{22}$}; \draw (21.5,-0.5) node[anchor=north west] {$1$}; \draw (6.5,13) node {$f(z)$}; \draw [fill=xdxdff] (3,0) circle (8pt);\draw [fill=xdxdff] (19,0) circle (8pt);\draw [fill=xdxdff] (11,0) circle (8pt);
          \end{tikzpicture}
         \caption{Distribution of voters and candidates for the lower bound of $1/12$ in Theorem \ref{112ex}. Candidates are indicated by blue circles.}
         \label{fig:sawtooth}
     \end{figure}

    \[ 
f(z) = 
     \begin{cases}
       22\epsilon &\quad 0\leq z<\frac{3}{22}\\
       121\epsilon\left(\frac{7}{22}-z\right) &\quad\frac{3}{22}\leq z<\frac{7}{22}\\
       \frac{121}{2}\epsilon\left(z-\frac{7}{22}\right) &\quad\frac{7}{22}\leq z<\frac{11}{22}\\
       \frac{121}{2}\epsilon\left(\frac{15}{22}-z\right) &\quad\frac{11}{22}\leq z<\frac{15}{22}\\
       121\epsilon\left(z-\frac{15}{22}\right) &\quad\frac{15}{22}\leq z<\frac{19}{22}\\
       22\epsilon &\quad\frac{19}{22}\leq z\leq1\\
     \end{cases}
     \]
     Consider the strategy profile $X=(x_1,x_2,x_3) = (\frac{3}{22},\frac{1}{2},\frac{19}{22})$. We will prove that $X$ is an $\epsilon$-equilibrium. For this profile, $U_{1}(X)= F(7/22) = 5 \epsilon$, $U_2(X) = F(15/22) - F(7/22) = 2 \epsilon$, and $U_3(X) = 1 - F(15/22) = 5 \epsilon$.

     For candidates 1 and 3, since $x_2 = 1/2$, they can at best deviate to obtain half the votes, and can hence increase their votes by at most $1/2 - 5\epsilon$ $=\epsilon$. For candidate 2, if he deviates to any $x' \in (x_1, x_3)$, he gets the votes in an interval of length $(x_3 - x_1)/2$, and it can be checked that the interval of this length with the maximum votes is centred at $x' = 7/22$ and $x' = 15/22$, each of which gets $3 \epsilon$ votes. Further if candidate 2 deviates to $x' \not \in [x_1, x_3]$, he gets at most $3 \epsilon$. In either case, no candidate can increase their votes by more than $\epsilon$, and hence this is an $\epsilon = 1/12$-equilibrium.
\end{proof}

Our previous lower bound holds irrespective of the voter distribution. We next consider worst-case voter distributions. How much further from an exact equilibrium are we pushed? Our next two theorems give tight bounds of $1/6$ on the worst-case approximation.

\begin{theorem}
    For any constant $\epsilon<\frac{1}{6}$, there exists a distribution $f$ of voters for which there does not exist an $\epsilon$-equilibrium with three candidates as $\delta \rightarrow 0$.
    \label{thm:three-candidates-lb}
\end{theorem}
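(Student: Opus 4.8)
The plan is to exhibit an explicit voter distribution $f$ and argue that \emph{every} strategy profile admits a deviation increasing some candidate's votes by at least $1/6 - O(M\delta)$; letting $\delta \to 0$ then rules out $\epsilon$-equilibria for all constant $\epsilon < 1/6$. As in Theorem~\ref{112ex}, $f$ will be piecewise-defined and symmetric about $1/2$, but now shaped to concentrate mass in two regions flanking the center so that the four ``pieces'' appearing in the proof of Theorem~\ref{lb112} (namely $F(x_1)$, the two inner blocks split at $(x_1+x_3)/2$, and $1-F(x_3)$) can never be simultaneously balanced. Since the general argument yields only $1/12$ precisely when all four pieces sit near $3\epsilon$, forcing an imbalance is what should drive the bound up to $1/6$.

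Next I would fix an arbitrary profile $x_1 < x_2 < x_3$ and catalog the bounded list of relevant deviations per candidate. For the middle candidate these are: jumping just inside a neighbor to capture a half-interval $[x_1,(x_1+x_3)/2]$ or $[(x_1+x_3)/2,x_3]$, and jumping past a neighbor to capture a tail $[0,x_1]$ or $[x_3,1]$. For an outer candidate they are: squeezing in next to candidate~2 (which, as in Claim~\ref{mid}, nets exactly candidate~2's near votes $U_2^L$ or $U_2^R$), and leaping to the far side to grab the opposite tail or an inner block. Invoking Claims~\ref{mid} and~\ref{extreme}, I may assume toward a contradiction that $U_2(X) \le 2(\epsilon+M\delta)$ and that the outer tails are at most $3(\epsilon+M\delta)$; I then translate each remaining deviation into an inequality in $F$ evaluated at $x_1$, $(x_1+x_3)/2$, and $x_3$, and show that for the chosen $f$ these inequalities are jointly infeasible once $\epsilon < 1/6$. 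A convenient internal checkpoint is the symmetric critical placement, where candidate~2's two escapes---its tail jump and its inner-block grab---should bind simultaneously at $1/6$; setting them equal together with symmetry and $\int f = 1$ forces $F(x_1)=1/4$ and $U_2(X)=1/12$ there, which in turn pins down the mass that $f$ must concentrate just inside each of $x_1,x_3$ (a near-spike flanking the center) while the squeeze gain stays small, around $1/24$.

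The crux is the case analysis over where the three candidates sit relative to the concentrated-mass regions of $f$, together with the design of $f$ itself. Unlike the $1/12$ construction, the bound here must be tight against the matching upper bound of $1/6$, so there is no slack: $f$ has to be tuned so that the \emph{best} placement---the one minimizing the maximum deviation gain---still leaves some candidate a profitable move of exactly $1/6$. I expect the main obstacle to be ruling out that some \emph{asymmetric} placement beats the symmetric critical point; handling this requires enumerating the orderings of the candidates against the support of $f$ and confirming in each that a tail jump, an inner-block grab, or a squeeze already reaches $1/6$. I also anticipate that the $\delta$-separation constraint---which prevents two candidates from co-locating on a concentrated mass---is exactly what blocks escape from the $1/6$ barrier, mirroring why the co-location variant of Table~\ref{table:results2} achieves the strictly better bound $1/7$.
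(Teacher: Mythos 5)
There is a genuine gap here, and it is not merely that your distribution is left unspecified: the design principle you commit to points at the \emph{wrong kind} of distribution. You propose an $f$ that is symmetric about $1/2$ with its mass concentrated in two regions flanking the center (``a near-spike flanking the center'' just inside each of $x_1,x_3$). Distributions of exactly this shape are best-case, not worst-case. Concretely, take two spikes of mass $1/2$ each, of width $2w$ and height $M=1/(4w)$, centred at $1/4$ and $3/4$, and place $x_1=1/4$, $x_2=1/4+\tfrac{2}{3}w$, $x_3=3/4$. Then $U_1=1/3$, $U_2=1/6$, $U_3=1/2$, and checking every deviation (the squeezes to $x_2\pm\delta$, the jumps past $x_1$ or $x_3$, and interior relocations) shows no candidate gains more than $1/12$ up to $O(M\delta)$ --- i.e.\ this symmetric two-spike distribution meets the \emph{universal} lower bound of Theorem~\ref{lb112} and is as far from forcing $1/6$ as any distribution can be. The paper's own tight example for $1/12$ (Theorem~\ref{112ex}) is likewise symmetric. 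Relatedly, Claims~\ref{mid}, \ref{extreme} and \ref{midtot}, which you plan to invoke, hold for every distribution and can only ever yield $1/12$ ($1 \le 4\cdot 3(\epsilon+M\delta)$); to reach $1/6$ you must strengthen some of those four bounds for your particular $f$, and you never identify which ones or by what mechanism a symmetric $f$ would do so.

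The missing idea --- and what the paper actually does (Theorem~\ref{thm:three-candidates-lb} is proved as the $m=3$ case of Theorem~\ref{thm:general-lb}) --- is an extreme \emph{asymmetry}. Using Theorem~\ref{magicfunction}, one takes $f(z)\propto 1/z$ supported on $[e^{-1/\gamma},1]$, so that $F(y)-F(y/2)\le\gamma$ for every $y$. Since candidate $k$'s left boundary $(x_{k-1}+x_k)/2$ is at least $x_k/2$, this kills the left votes of candidates $2$ and $3$: $U_2^L,U_3^L\le\gamma$. The squeeze argument then gives $U_1^R,U_2^R\le\epsilon+\gamma+M\delta$, hence $U_2\le\epsilon+O(\gamma+M\delta)$ instead of $2\epsilon$, hence the two tails are at most $2\epsilon+O(\gamma+M\delta)$ instead of $3\epsilon$ and the two inner blocks at most $\epsilon+O(\gamma+M\delta)$ instead of $3\epsilon$; summing the four pieces gives $1\le 6\epsilon+O(\gamma+M\delta)$. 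A symmetric $f$ cannot reproduce this mechanism, since suppressing left votes everywhere is incompatible with the mirror image suppressing right votes. Your closing intuition that the $\delta$-separation is ``exactly what blocks escape from the $1/6$ barrier'' is also off: the paper's bound survives $\delta\to 0$ and is driven by the $\gamma$-decay property, not by candidates being unable to co-locate on a spike.
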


This theorem is a special case of the more general result proved in Theorem~\ref{thm:general-lb}, where we show a lower bound of $1/(m+3)$ in the worst case if there are $m$ candidates. We thus defer the proof until later. Our next theorem however gives the required upper bound.

\begin{theorem}
Given any distribution $f$ of voters, a $(1/6 + M \delta)$-equilibrium can be found for three candidates in polynomial time.
\label{thm:ub-1/6}
\end{theorem}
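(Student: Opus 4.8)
The plan is to exhibit an explicit algorithm that, using only the $F$ and Cut oracles, returns positions $x_1<x_2<x_3$, and then to verify that as $\delta\to0$ no deviation helps any candidate by more than $1/6$. The first step is to reduce each candidate's deviations to a finite checklist. For the middle candidate these are: (a) slide within $(x_1,x_3)$, which yields the window $[\frac{x_1+x'}{2},\frac{x'+x_3}{2}]$ of fixed width $\frac{x_3-x_1}{2}$ at the new position $x'$; and (b) jump to an extreme to collect $F(x_1)$ or $1-F(x_3)$. For an outer candidate, say candidate $1$, these are: (c) leapfrog just past $x_2$ and collect the adjacent half $U_2^L$ of the middle territory (obtaining $U_1+U_2^L$); (d) move into the gap $(x_2,x_3)$, collecting a window of width $\frac{x_3-x_2}{2}$; and (e) jump past $x_3$ to collect $1-F(x_3)$. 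I would bound each family in turn, and symmetrically for candidate $3$.

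The analysis rests on two placement tools. Tool one places $x_2$ at a densest window: among $x'\in(x_1,x_3)$ choose the maximizer of $F(\frac{x'+x_3}{2})-F(\frac{x_1+x'}{2})$, which can be located using the oracle. Because the windows the middle candidate can reach by deviation (a) are exactly the width-$\frac{x_3-x_1}{2}$ windows with left endpoint in $[x_1,\frac{x_1+x_3}{2}]$ --- the very same family --- this choice makes deviation (a) yield zero gain. Moreover any window an outer candidate can occupy inside $[x_1,x_3]$ has width at most $\frac{x_3-x_1}{2}$ and is therefore contained in some window of that width, so its mass is at most the current $U_2$; this caps deviation (d) by $U_2-U_1$. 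Tool two controls the remaining quantities $F(x_1)$, $1-F(x_3)$, $U_2^L$, $U_2^R$ through the placement of $x_1,x_3$: the clean target is to put the midpoints $\frac{x_1+x_2}{2}$ and $\frac{x_2+x_3}{2}$ at the $\frac13$- and $\frac23$-quantiles with $x_2$ at the median, which forces $U_1=U_2=U_3=\frac13$ and $U_2^L=U_2^R=\frac16$, so that leapfrog (c) equals exactly $\frac16$ while the extreme deviations (b),(e) are nonpositive, since each such jump collects at most $\frac13$.

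The main obstacle is that these two tools pull $x_2$ in opposite directions and cannot in general be satisfied at once: fixing $x_2$ at a densest window forfeits direct control of the halves $U_2^L,U_2^R$ that govern the leapfrog deviation (c), whereas placing $x_2$ at the median to balance the halves reintroduces a positive sliding gain in (a) and weakens the cap on (d) from $U_2$ to the true densest-window mass. This tension is genuine --- for skewed or highly concentrated distributions (e.g.\ when the $\frac13$- or $\frac23$-quantile lies far to one side of the median, so the symmetric quantile placement is infeasible) neither pure tool suffices. I expect to resolve it by a short case analysis together with a continuity/intermediate-value argument: in the balanced regime the quantile placement gives the bound directly; otherwise one pushes an outer candidate toward the boundary (shrinking $F(x_1)$ or $1-F(x_3)$, hence deviations (b),(e)) and slides $x_2$ continuously between the densest-window and half-balancing positions, trading the residual sliding gain in (a) against the leapfrog gains in (c) so that their maximum is pinned at $\frac16$. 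The accumulated additive error from the distinctness constraint $|x_i-x_j|\ge\delta$ is $O(M\delta)$ at each step, yielding the stated $\frac16+M\delta$, exactly as the $M\delta$ terms arise in Claims \ref{mid}--\ref{midtot}.
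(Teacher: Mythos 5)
Your proposal is a plan rather than a proof: it correctly enumerates the deviation types and honestly identifies the central tension (a single placement of $x_2$ cannot simultaneously kill the middle candidate's sliding gain and control the half-masses $U_2^L,U_2^R$ that govern leapfrogging), but the step that is supposed to resolve this tension --- ``a short case analysis together with a continuity/intermediate-value argument'' that interpolates between the densest-window position and the half-balancing position while ``pushing an outer candidate toward the boundary'' --- is never carried out, and it is not clear it can be. In particular, once you abandon the symmetric quantile placement, the quantities $F(x_1)$, $1-F(x_3)$, $U_1$, $U_3$, and $F(x_3)-F(x_1)$ all become free parameters, and your caps on deviations (b), (d), (e) (which were derived assuming $U_1=U_3=\tfrac13$ and the densest-window mass at most $\tfrac13$) no longer apply. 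You would need to exhibit a single profile for which \emph{all five} deviation families are simultaneously bounded by $\tfrac16$, and the proposal does not produce one.

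The missing idea, which is also where the paper's proof diverges from yours, is that neither of your two ``tools'' is needed. The paper fixes $x_1=\mathrm{Cut}(0,\tfrac13)$ and $x_3=\mathrm{Cut}(0,\tfrac23)$ (the candidates themselves at the terciles, not the midpoints $\tfrac{x_1+x_2}{2},\tfrac{x_2+x_3}{2}$ as in your tool two), so that $U_1=U_3=\tfrac13$ and every interval an outer candidate could jump into carries at most $\tfrac13$ of the mass. It then does \emph{not} try to make candidate 2's sliding gain zero: it only needs $U_2\ge\tfrac16$ together with the trivial cap of $\tfrac13$ on anything candidate 2 can reach. This is achieved by one intermediate-value step: whichever half of $(x_1,x_3)$ holds at least $\tfrac16$ of the votes, slide $x_2$ until that side's votes equal \emph{exactly} $\tfrac16$ (say $U_2^R=\tfrac16$). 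That single equality then also pins the leapfrog deviations --- candidate 3 moving just right of $x_2$ gains exactly $U_2^R\le\tfrac16$, and $F(x_2)\le\tfrac12+M\delta$ caps candidate 1's leapfrog at $\tfrac16+M\delta$. So the tension you flagged dissolves once you accept a $\tfrac16$ gain for the middle candidate rather than trying to eliminate it; your densest-window tool is doing work that the bound does not require.
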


\begin{proof}
    Let $x_1 = \text{Cut}(0,1/3)$ and $x_3 = \text{Cut}(0,2/3)$. Then $F(x_1) = F(x_3) - F(x_1) = 1 - F(x_3) = 1/3$. Keeping $x_1$ and $x_3$ fixed as the equilibrium locations for candidates 1 and 3, we will try and find a location for candidate 2 to obtain an equilibrium. Notice that either $F((x_1 + x_3)/2) - F(x_1) \ge 1/6$, or $F(x_3) - F((x_1+x_3)/2) \ge 1/6$. Without loss of generality, assume $F((x_1 + x_3)/2) - F(x_1) \ge 1/6$. Since $F$ is continuous, there must exist $x \in (x_1, x_3)$ so that $F((x+x_3)/2) - F(x) = 1/6$. Let $x_2$ be such a location $x$. If $|x_2 - x_1| < \delta$, let $x_2 = x_1 + \delta$.\footnote{Since $F(x_3) - F(x) \ge 1/6$, and we assume $M\delta < 10^{-3}$, clearly $x < x_3 - \delta$.}  We will now show that $X = (x_1, x_2, x_3)$ is in fact a $1/6$-equilibrium.

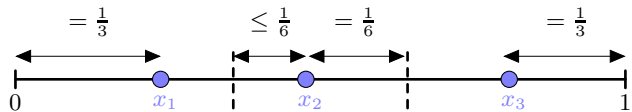
\begin{figure}[H]
    \centering
    \begin{tikzpicture}[line cap=round,line join=round,>=triangle 45,x=1cm,y=1cm,scale=0.382]
    \clip(-1,-3) rectangle (22,4);
\draw [dashed,line width=1pt] (13.5,-1) -- (13.5,1);
\draw [dashed,line width=1pt] (7.5,-1) -- (7.5,1);
\draw [line width=1pt] (21,-0.3) -- (21,0.3);
\draw [line width=1pt] (0,-0.3) -- (0,0.3);
\draw [line width=1pt] (0,0) -- (21,0);
\draw [line width=0.3pt,<->] (0,0.8) -- (5,0.8);
\draw [line width=0.3pt,<->] (16.8,0.8) -- (21,0.8);
\draw [line width=0.3pt,<->] (7.5,0.8) -- (10,0.8);
\draw [line width=0.3pt,<->] (10.1,0.8) -- (13.4,0.8);
\begin{footnotesize}
\draw [fill=xdxdff] (5,0) circle (8pt);
\draw [fill=xdxdff] (17,0) circle (8pt);
\draw [fill=xdxdff] (10,0) circle (8pt);
\draw[color=xdxdff] (10.16,-0.8) node {$x_2$};
\draw[color=xdxdff] (5.16,-0.8) node {$x_1$};
\draw[color=xdxdff] (17.16,-0.8) node {$x_3$};
\draw[color=black] (0,-0.8) node {$0$};
\draw[color=black] (21,-0.8) node {$1$};
\draw[color=black] (2.55,2) node {$=\frac{1}{3}$};
\draw[color=black] (8.8,2) node {$\le\frac{1}{6}$};
\draw[color=black] (11.7,2) node {$=\frac{1}{6}$};
\draw[color=black] (19,2) node {$=\frac{1}{3}$};
\end{footnotesize}
\end{tikzpicture}
    \caption{Figure showing the votes, as shown by Theorem~\ref{thm:ub-1/6}. Blue circles show candidates. The dashed lines are the mid-points of $x_1,x_2$ and $x_2, x_3$.}
    \label{fig:ub-1/6}
\end{figure}
    
    For candidate 2, $U_2^R(X) = F((x_2 + x_3)/2) - F(x_2) \ge 1/6 - M \delta$. Locations $x_1$ and $x_3$ equally partition the set of voters, hence the maximum votes that candidate 2 can get is $1/3$, and hence he cannot increase his votes by more than $1/6 + M \delta$.

    For candidate 1, $U_1^L(X) = 1/3$. Since $U_2^R(X) \ge 1/6 - M \delta$ and $U_3^R(X) = 1/3$, it follows that $F(x_2) \le 1/2 + M \delta$. Hence $x_2$ and $x_3$ partition the set of voters so that at most $1/2 + M \delta$ are to the left of $x_2$, at most $1/3$ between $x_2$ and $x_3$, and $1/3$ to the right of $x_3$. It follows that candidate 1 cannot increase his votes by more than $1/6 + M \delta$.

    For candidate 3, $U_1^L(X) = 1/3$. Again we note that $x_1$ and $x_2$ partition the set of voters so that $1/3$ voters are to the left of $x_1$, and at most $1/3$ between $x_1$ and $x_2$. Hence deviating to the left of $x_2$ cannot increase his votes. If he deviates to the right of $x_2$, he can at most gain the right votes of candidate 2 which is $U_2^R(X) \le 1/6$. Hence candidate 3 also cannot increase his votes by more than $1/6$, and hence $X$ is a $(1/6 + M \delta)$-equilibrium.
 \end{proof}
\section{Approximate Equilibria for $m$ Candidates}

How do the previous results change as we consider more candidates? As previously shown~\cite{Hotelling29}, for $m \geq 4$, there exist voter distributions that admit equilibria, hence in the best case $\epsilon = 0$. Hence we focus on the worst case and show that the approximation improves as $m$ increases.

\begin{theorem}
    Given a distribution $f$ of voters, a $\frac{1}{m+1}$-equilibrium can be found in polynomial time.
    \label{thm:general-ub}
\end{theorem}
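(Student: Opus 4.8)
The plan is to give an explicit quantile-based placement and then bound every candidate's best deviation. Set $q_0 = 0$, $q_{m+1} = 1$, and for $1 \le i \le m$ place candidate $i$ at $q_i = \text{Cut}(0, i/(m+1))$, so that each of the $m+1$ intervals $[q_{i-1}, q_i]$ carries exactly $\frac{1}{m+1}$ of the mass. (If two consecutive quantiles fall within $\delta$ of each other we nudge them $\delta$ apart; all such perturbations contribute only $O(M\delta)$ and vanish in the $\lim_{\delta \to 0}$ of the $\epsilon$-equilibrium definition.) This uses $m$ Cut queries, so it is clearly polynomial, and it is exactly the ``$m$-th quantile'' oracle access advertised in the introduction. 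It then remains to show that no candidate can increase his votes by more than $\frac{1}{m+1}$.

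I would fix a candidate $i$ and classify his deviations against the field $X_{-i}$ by where the new position $x'$ lands. The useful observation is that a deviator at $x'$ inside a gap $(a,b)$ of the field collects exactly the voters of the sliding window $[(a+x')/2,(x'+b)/2]$, an interval of fixed width $(b-a)/2$. If $(a,b)$ is one of the untouched gaps $(q_{j-1}, q_j)$ of mass $\frac{1}{m+1}$, or (for interior $i$) the untouched left/right tail, then the window lies in a region of total mass $\frac{1}{m+1}$, so the new votes are at most $\frac{1}{m+1}$; since $U_i(X) \ge 0$, the gain is at most $\frac{1}{m+1}$. For an end candidate, say $i=1$, removing him enlarges the left tail to mass $\frac{2}{m+1}$, and the most a deviation can grab there is $F(q_2) = \frac{2}{m+1}$; but candidate $1$ already owns the captive tail $U_1^L(X) = F(q_1) = \frac{1}{m+1}$, so the gain is again at most $\frac{2}{m+1} - \frac{1}{m+1} = \frac{1}{m+1}$.

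The one genuinely delicate case --- and the step I expect to be the main obstacle --- is an interior candidate deviating back into the gap $(q_{i-1}, q_{i+1})$ that his own removal merged, a gap of mass $\frac{2}{m+1}$ in which a window could a priori grab almost everything. Here I would prove the key inequality that the best window still beats $U_i(X)$ by at most $\frac{1}{m+1}$. Write the window as $[s, s+w]$ with $w = (q_{i+1}-q_{i-1})/2$ and $s$ ranging over $[q_{i-1}, q_{i-1}+w]$, and set $g(s) := F(s+w)-F(s)$; at $s_0 = (q_{i-1}+q_i)/2$ the window is exactly candidate $i$'s own territory, so $g(s_0) = U_i(X)$. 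For $s \ge s_0$ we have $g(s) - g(s_0) = [F(s+w) - F(s_0+w)] - [F(s) - F(s_0)]$, where the second bracket is nonnegative and the first is the mass of $[s_0+w, s+w] \subseteq [q_i, q_{i+1}]$, which is at most $\frac{1}{m+1}$. The case $s \le s_0$ is symmetric, the extra mass now coming from $[q_{i-1}, q_i]$. Hence every window exceeds $U_i(X)$ by at most $\frac{1}{m+1}$, so this deviation too yields a gain of at most $\frac{1}{m+1}$.

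Collecting the three cases shows every candidate's deviation gain is at most $\frac{1}{m+1}$ (up to $O(M\delta)$ terms that vanish as $\delta \to 0$), so the profile is a $\frac{1}{m+1}$-equilibrium. The entire difficulty is concentrated in the merged-gap bound; the intuition that makes it work is that the definition of $q_i$ pins candidate $i$ to wherever the mass of $[q_{i-1},q_{i+1}]$ accumulates, so his own window already captures most of any concentration a deviator could exploit, and sliding the window in either direction can reach into only one of the two unit-mass halves of the merged gap.
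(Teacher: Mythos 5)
Your proposal is correct and follows essentially the same route as the paper: place the candidates at the $(m+1)$-quantiles, dismiss deviations into other gaps by the $\tfrac{1}{m+1}$ mass bound, and handle a deviation into the candidate's own merged gap by the telescoping observation that sliding the window only adds mass from one adjacent unit-mass interval. Your sliding-window parametrization $g(s)=F(s+w)-F(s)$ and your explicit treatment of the end candidates are just cosmetic repackagings of the paper's bound $U_k(X')\le \frac{1}{m+1}+U_k^R(X)$ with the conventions $x_0=0$, $x_{m+1}=1$.
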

\begin{proof}
    Our $\frac{1}{m+1}$-equilibrium is simply the strategy profile where the candidates form an equipartition of the voters. I.e., we choose $x_1 = \text{Cut}(0,\frac{1}{m+1})$, and $x_k = \text{Cut}(x_{k-1},\frac{1}{m+1})$ for $k=2, \ldots, m$. Note that then $1-F(x_m) = \frac{1}{m+1}$. We now show that $X = (x_1,x_2,...,x_m)$ is indeed a $\frac{1}{m+1}$-equilibrium. For ease of notation, let us also define $x_0 = 0$ and $x_{m+1} = 1$.

    For any candidate $k$, deviating to a location in another interval $x_k' \in (x_{l-1}, x_l)$ where $l \notin \{k,k+1\}$ cannot increase his votes by more than $\frac{1}{m+1}$, since by design $F(x_l) - F(x_{l-1}) = \frac{1}{m+1}$. Hence for any candidate $k$ we only need to consider deviations to $x_k' \in (x_{k-1}, x_{k+1})$. For this, assume $x_k' < x_k$. If candidate $k$'s right boundary $(x_k' + x_{k+1})/2$ is left of $x_k$, then the entire votes for $x_k'$ are in the interval $ (x_{k-1}, x_k)$, and in this case candidate $k$'s new utility is at most $\frac{1}{m+1}$. If the right boundary $(x_k' + x_{k+1})/2$ is to the right of $x_k$, then we get the following bound. Figure~\ref{fig:general-ub} depicts the candidate positions and their boundaries in this case.

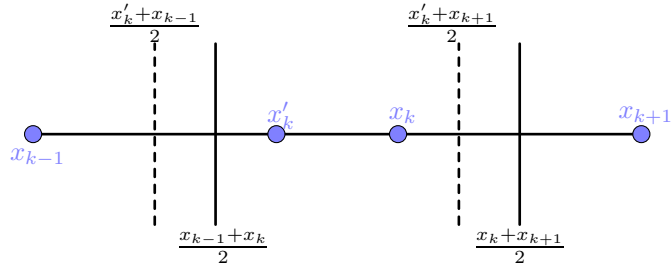
\begin{figure}[!htp]
    \centering
    \begin{tikzpicture}[line cap=round,line join=round,>=triangle 45,x=1cm,y=1cm,scale=0.4]
\clip(-1,-5) rectangle (21,5);
\draw [line width=1pt] (6,-3) -- (6,3);
\draw [dashed,line width=1pt] (14,-3) -- (14,3);
\draw [dashed,line width=1pt] (4,-3) -- (4,3);
\draw [line width=1pt] (16,-3) -- (16,3);
\draw [line width=1pt] (0,0) -- (20,0);
\begin{small}
\draw [fill=xdxdff] (0,0) circle (8pt);
\draw[color=xdxdff] (0.16,-0.78) node {$x_{k-1}$};
\draw [fill=xdxdff] (20,0) circle (8pt);
\draw[color=xdxdff] (20.16,0.62) node {$x_{k+1}$};
\draw [fill=xdxdff] (8,0) circle (8pt);
\draw[color=xdxdff] (8.16,0.62) node {$x_{k}'$};
\draw [fill=xdxdff] (12,0) circle (8pt);
\draw[color=xdxdff] (12.16,0.62) node {$x_k$};
\draw[color=black] (6.24,-3.7) node {$\frac{x_{k-1}+x_k}{2}$};
\draw[color=black] (13.76,3.7) node {$\frac{x_k'+x_{k+1}}{2}$};
\draw[color=black] (4,3.7) node {$\frac{x_k'+x_{k-1}}{2}$};
\draw[color=black] (16,-3.7) node {$\frac{x_k+x_{k+1}}{2}$};
\end{small}
\end{tikzpicture}
    \caption{Change in votes for candidate $k$, if he shifts from $x_k$ to $x_k'$.}
    \label{fig:general-ub}
\end{figure}

    \begin{align*}
        U_2(X') & = F((x_k' + x_{k+1})/2)  - F((x_k' + x_{k-1})/2) \\
            & = (F(x_k) - F((x_k' + x_{k-1})/2)  \\
            &\hspace{1.5cm}+ F((x_k' + x_{k+1})/2) - F(x_k))
            \end{align*}
    \noindent Since $x_{k-1} \le x_k' < x_k$, 
    \begin{align*}
        U_2(X')     & \le (F(x_k) - F(x_{k-1}) + \\
            &\hspace{1.5cm}F((x_k + x_{k+1})/2) - F(x_k)) \\
            & = \frac{1}{m+1} + U_2^R(X) \le \frac{1}{m+1} + U_2(X) \, .
    \end{align*}
    
    \noindent Thus, no candidate can increase their votes by more than $\frac{1}{m+1}$, and the strategy profile $X$ is an $\frac{1}{m+1}$-equilibrium. 
\end{proof}

We now define a particular rapidly decreasing function that will be very useful in defining voter density functions for showing lower bounds for approximate equilibria.

\begin{theorem}\label{magicfunction}
    Given $A > \gamma>0$ and $l \in [0,1]$, there exists an integrable and bounded function $g:[0,l]\rightarrow\mathbb{R^+}\cup\{0\}$ that satisfies (i) $\int_{0}^{l}g(z)\,dz = A$, and (ii) for all $y\in[0,l]$, $\int_{y/2}^{y}g(z)\,dz\leq\gamma$.
\end{theorem}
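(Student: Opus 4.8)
The plan is to exploit the scale-invariance of the constraint in (ii): the window $[y/2,y]$ always has its endpoints in ratio $2$, which is exactly the ratio spanned by a dyadic interval. Accordingly, I would partition $(0,l]$ into the dyadic cells $I_k := (l/2^{k+1},\, l/2^k]$ for $k = 0,1,2,\dots$, and first record the key geometric observation: if $y \in I_j$, then $y/2 \in I_{j+1}$, so the window $[y/2,y]$ runs from a point of $I_{j+1}$ to a point of $I_j$ and hence lies in $\overline{I_j}\cup\overline{I_{j+1}}$, meeting no other cell. Writing $m_k := \int_{I_k} g$ for the mass $g$ places in cell $I_k$, this observation shows that condition (ii) is implied by the purely combinatorial requirement $m_j + m_{j+1} \le \gamma$ for every $j$.

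This reduces the theorem to distributing a total mass $A$ across the cells subject to the pairwise cap $m_j + m_{j+1}\le\gamma$. The crucial point is that there are infinitely many cells accumulating at $0$, so the total capacity is unbounded even though each window is capped at $\gamma$; hence for any finite $A$ only finitely many cells are needed. Concretely, I would take $N=\lceil 2A/\gamma\rceil$, set $m_k=\gamma/2$ for $k\le N-2$, let the residual $m_{N-1}=A-(N-1)\gamma/2$ absorb the remainder, and set $m_k=0$ for $k\ge N$. The choice of $N$ gives $0< m_{N-1}\le \gamma/2$, so every pairwise sum is at most $\gamma$ (using $m_N=0$ at the truncation boundary), while $\sum_{k=0}^{N-1}m_k=A$ yields (i). I would then define $g$ to equal the constant $m_k/|I_k|=m_k\,2^{k+1}/l$ on each $I_k$ and set $g\equiv 0$ on $[0,l/2^N]$.

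It then remains to verify the stated properties. Non-negativity and integrability are immediate, since $g$ is a non-negative step function with finitely many pieces. Boundedness holds because its largest value is at most $\gamma\,2^{N-1}/l<\infty$. Finally, for (ii), given any $y\in(0,l]$ I would locate the cell $I_j$ with $y\in I_j$ and bound $\int_{y/2}^y g \le m_j+m_{j+1}\le\gamma$ via the geometric observation (with the convention $m_k=0$ for $k\ge N$, which also covers $y$ small enough that the window lies in the zero region). The main obstacle to anticipate is the three-way tension between achieving an arbitrarily large total mass $A$, respecting the per-window cap $\gamma$, and keeping $g$ bounded. The resolution is exactly the geometric shrinking of the cells: it lets finitely many disjoint windows collectively hold unbounded mass, so a finite dyadic truncation reconciles all three demands at once — an infinite construction, by contrast, would force the density to blow up near $0$ and violate boundedness.
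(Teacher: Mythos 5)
Your proof is correct, but it takes a genuinely different route from the paper's. The paper writes down a closed-form density, $g(z)=\gamma/z$ on $[le^{-A/\gamma},l]$ and $0$ below, and verifies both properties by direct integration: each doubling window carries exactly $\gamma\ln 2<\gamma$, and the exponential cutoff $le^{-A/\gamma}$ is tuned so the total mass is $A$. You instead discretize: the dyadic cells $I_k=(l/2^{k+1},l/2^k]$ reduce condition (ii) to the combinatorial constraint $m_j+m_{j+1}\le\gamma$ on cell masses, which you then satisfy with $\lceil 2A/\gamma\rceil$ cells of mass $\gamma/2$ each (plus a residual). Both constructions rest on the same insight --- the window $[y/2,y]$ spans one octave, so mass $\Theta(\gamma)$ per octave over $\Theta(A/\gamma)$ octaves accumulating at $0$ reconciles the cap with an arbitrarily large total --- but your version is more elementary (no logarithms in the verification, only a counting argument) and cleanly modular, at the cost of producing a step function rather than a single formula; the paper's version gives the slightly stronger per-window bound $\gamma\ln 2$ and is already ``smooth'' away from its single jump. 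One small remark: your bound $\int_{y/2}^{y}g\le m_j+m_{j+1}$ is loose --- since $g$ is constant on each cell, the window mass is actually a convex combination of $m_j$ and $m_{j+1}$, hence at most $\gamma/2$ --- but the looser bound is all you need and the argument is sound. Both proofs leave the resulting $g$ discontinuous and wave at smoothing, so you are no worse off there.
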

\begin{proof}
    Define $g$ as:
    \[   
g(z) = 
     \begin{cases}
     0 &\quad\text{for } z< le^{-\frac{A}{\gamma}} \, ,\\ 
       \frac{\gamma}{z} &\quad \text{for } le^{-\frac{A}{\gamma}}\leq z\leq l \, .
     \end{cases}\]
     Let $G(y)=\int_{0}^{y}g(z)\,dz$. For $y \le le^{-\frac{A}{\gamma}}$, $G(z) = 0$, and for $z\geq le^{-\frac{A}{\gamma}}$, $G(z) = A + \gamma\log\left(\frac{z}{l}\right)$.
     Notice that $G(l) = A$, and for any $y\in[0,l]$, $G(y) - G(y/2) = \gamma \ln 2$ $< \gamma$.
\end{proof}

The function derived from Theorem \ref{magicfunction} is discontinuous. However the function can be smoothed in the intuitive way so that it is continuous and differentiable (though the derivatives may be large), and all other required properties are maintained. We skip a formal description of the resulting function owing to the complexity.

\begin{theorem}
    Given $m$ candidates, there exists a distribution $f$ of voters such that, for any $\epsilon$-equilibrium, $\epsilon \ge \frac{1}{m+3}$ as $\delta \rightarrow 0$.
    \label{thm:general-lb}
\end{theorem}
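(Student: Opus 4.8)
The plan is to exhibit a single ``hard'' distribution $f$ and show that every strategy profile admits a deviation gaining at least $\tfrac{1}{m+3}$ as $\delta \to 0$, so that no $\epsilon$-equilibrium with $\epsilon < \tfrac1{m+3}$ can exist. The guiding principle is an accounting identity in the spirit of Theorem~\ref{lb112}: I will partition the total mass $1$ into exactly $m+3$ nonnegative pieces and argue that, in any $\epsilon$-equilibrium, each piece is at most $\epsilon + o(1)$, where $o(1)$ collects error terms vanishing with $\delta$ and with the parameter $\gamma$ of the magic function. Summing then yields $1 \le (m+3)(\epsilon + o(1))$, i.e. $\epsilon \ge \tfrac1{m+3} - o(1)$. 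The refinement over the four-piece argument of Theorem~\ref{lb112} is that the $m-1$ interior inter-candidate gaps should each contribute \emph{one} piece, while the two extreme regions $[0,x_1]$ and $[x_m,1]$ each contribute \emph{two} pieces, which is exactly where the ``$+3$'' (relative to $m-1$) comes from.

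This is where Theorem~\ref{magicfunction} enters. I would take $f$ symmetric about $1/2$ and place, near each endpoint, a reservoir of mass $\tfrac{2}{m+3}$ built from the rapidly decreasing function $g$ of Theorem~\ref{magicfunction} (reflected at the right end), inheriting its doubling-interval guarantee $\int_{y/2}^y g \le \gamma$; the remaining mass $\tfrac{m-1}{m+3}$ is distributed among $m-1$ narrow interior features, each of mass $\tfrac1{m+3}$, so that the interior gaps tile this mass one feature at a time (the masses sum to $2\cdot\tfrac{2}{m+3}+(m-1)\tfrac1{m+3}=1$). The purpose of the magic reservoirs is that the mass an endpoint candidate can \emph{defend} on the reservoir side is governed by a doubling interval — when its neighbour is pushed toward the extreme, its capture boundary sits at half its own coordinate — and is therefore at most $\gamma$; consequently the reservoir's $\tfrac{2}{m+3}$ of mass cannot be held intact and splits into two separately grabbable $\tfrac1{m+3}$-halves.

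Assuming $x_1 \le \dots \le x_m$ and an $\epsilon$-equilibrium $X$, I would establish the per-piece bounds using deviations of exactly the type in Claims~\ref{mid}--\ref{midtot}: (i) for each interior gap $[x_i,x_{i+1}]$, sliding a neighbouring candidate to $x_i+\delta$ or to $x_{i+1}-\delta$ bounds the left/right halves of the gap, and the narrowness of the interior feature lets these combine to bound the whole gap by $\epsilon + o(1)$; (ii) for $[0,x_1]$, a candidate sliding to $x_1-\delta$ grabs essentially $F(x_1)$, while the doubling property of the reservoir supplies a complementary bound, together capping $F(x_1)$ by $2(\epsilon+o(1))$ — two pieces — and symmetrically for $[x_m,1]$. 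Throughout I use $M\delta < 10^{-3}$ and $f \le M$ to control the error terms, and I let $\gamma \to 0$ and $\delta \to 0$ at the end; Theorem~\ref{thm:three-candidates-lb} is then the case $m=3$, where $m+3=6$ reproduces the worst-case bound $1/6$.

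The main obstacle I anticipate is the construction itself, specifically orienting and sizing the reservoirs so that their doubling guarantee forces each extreme region to contribute two capped pieces rather than one. A single deviation that grabs an entire endpoint reservoir only yields a gain of $\tfrac{2}{m+3}$ minus the deviator's current votes, which is too weak for small $m$ and degenerates at $m=3$; the argument must instead produce two separate $\tfrac1{m+3}$-deviations per endpoint. Pinning these down — verifying that the relevant capture intervals are genuine doubling intervals for the reservoir, that the $\delta$-separation constraints hold, and that the deviator does not simultaneously lose more than it gains — is the delicate part. The interior-gap bounds and the final summation are then routine adaptations of the proof of Theorem~\ref{lb112}.
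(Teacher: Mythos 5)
Your accounting identity is exactly the paper's: $F(x_1)$ and $1-F(x_m)$ each contribute two $\epsilon$-pieces and each of the $m-1$ gaps $[x_k,x_{k+1}]$ contributes one, giving $1\le(m+3)(\epsilon+o(1))$. But the construction you propose does not support the per-piece bounds, and the step you yourself flag as unresolved is the heart of the proof. The paper does not use a symmetric reservoirs-plus-spikes density: it takes a \emph{single} function from Theorem~\ref{magicfunction} on all of $[0,1]$, whose global property $F(y)-F(y/2)\le\gamma$ forces $U_k^L(X)\le\gamma$ for \emph{every} candidate $k\ge2$ at \emph{every} profile (since the left vote of $k$ lives inside the doubling interval $[x_k/2,x_k]$). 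That is the load-bearing fact. It is what makes the single deviation of candidate $k+1$ to $x_k+\delta$ bound the whole gap: the gain of that deviation is $U_k^R(X)-U_{k+2}^L(X')-M\delta$, and without the global doubling property there is no control on $U_{k+2}^L(X')$ --- when candidate $k+1$ slides left, candidate $k+2$'s territory expands leftward and can swallow one of your interior spikes of mass $\tfrac{1}{m+3}$, making the deviation vacuous. ``Narrowness of the interior feature'' does not fix this, because the candidates' positions are adversarial and a spike can sit anywhere relative to the relevant midpoints. Your symmetric construction destroys the global doubling property by design, so the gap bounds $F(x_{k+1})-F(x_k)=U_k^R(X)+U_{k+1}^L(X)\le\epsilon+2\gamma+M\delta$ are not available to you. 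Note also that your other proposed deviation --- sliding candidate $i$ to $x_{i+1}-\delta$ to grab $U_{i+1}^L$ --- only works for the leftmost candidate (as in Claim~\ref{mid}); an interior candidate forfeits its own left votes when it does this, so it does not cleanly bound $U_{i+1}^L$.

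The two-pieces-per-endpoint step is likewise resolved differently from the mechanism you sketch. It is not that the reservoir ``splits into two separately grabbable halves.'' Rather, the paper first establishes that every interior candidate has total utility at most $\epsilon+2\gamma+M\delta$ (right utility bounded by the deviation argument, left utility bounded by the global doubling property), and then considers candidate $2$ deviating to $x_1-\delta$: this grabs $F(x_1)-M\delta$ while forfeiting $U_2(X)$, so $F(x_1)\le\epsilon+U_2(X)+M\delta\le 2\epsilon+2\gamma+2M\delta$; symmetrically for $1-F(x_m)$. So the factor of $2$ at each end is ``$\epsilon$ from the deviation plus $\epsilon$ from the deviator's forfeited utility,'' not a structural splitting of an endpoint reservoir. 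Since your proposal neither establishes the interior-candidate utility bound (it cannot, with your $f$) nor supplies an alternative route to $F(x_1)\le 2\epsilon+o(1)$, the argument as written does not close. The fix is to abandon the symmetric construction and use the paper's single magic function, whose one global inequality simultaneously delivers the left-utility bounds everywhere.
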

\begin{proof}
    Fix $\gamma > 0$. We will eventually take the limit $\gamma \rightarrow 0$, to obtain the bound of $\frac{1}{m+3}$. We choose the distribution $f$ of the voters as per Theorem~\ref{magicfunction}, so that $f$ is bounded, integrable, $\int_0^1 f(z) dz = 1$, and for any $y \in [0,1]$, $F(y) - F(y/2) \le \gamma$. 

    We note one property that follows from the distribution $f$. For all candidates $k \ge 2$, 
    
    \begin{align*}
        U_k^L(X) & = F(x_k) - F((x_k + x_{k-1})/2) \\
        & \le F(x_k) - F(x_k/2) \le \gamma \, .
    \end{align*}
    
    We next show that for candidates $k \le m-1$, $U_k^R(X) \le \epsilon + \gamma + M \delta$. These utilities are depicted in Figure~\ref{fig:general-lb}.
    
    To see this, suppose for a contradiction that for some $k \le m-1$, $U_k^R(X) > \epsilon + \gamma + M \delta$. Then consider the profile $X'$ where candidate $k+1$ deviates to location $x_{k+1}' = x_k + \delta$. The increase in votes for candidate $k+1$ by deviating is

    \begin{align}
    U_{k+1}(X') - U_{k+1}(X) & \ge U_{k+1}^R(X') - U_{k+1}(X) \label{eqn:m-candidates} \, .
    \end{align}

    We note that before candidate $k+1$'s deviation, the total mass of voters in the interval $[x_k,x_{k+2}]$ is $U_k^R(X) + U_{k+1}(X) + U_{k+2}^L(X)$ (see Figure~\ref{fig:general-lb}). After the deviation, this mass is at most $M \delta + U_{k+1}^R(X') + U_{k+2}^L(X')$, hence 

    \begin{align*}
        U_{k+1}^R(X') - U_{k+1}(X) & \ge \left( U_k^R(X) + U_{k+2}^L(X) \right) -\\ &\hspace{0.6cm} \left( M \delta +U_{k+2}^L(X')\right) \, .
    \end{align*}

    Substituting in~\eqref{eqn:m-candidates}, and neglecting the term $U_{k+2}^L(X)$, we obtain

    \begin{align*}
    U_{k+1}(X') - U_{k+1}(X) & \ge U_k^R(X) - M \delta - U_{k+2}^L(X') \, .
    \end{align*}

\begin{figure}[H]
    \centering
    \begin{tikzpicture}[line cap=round,line join=round,>=triangle 45,x=1cm,y=1cm,scale=0.4]
\clip(-1,-3) rectangle (21,9.35);
\draw [line width=1pt] (0,0) -- (0,9.35);
\draw [dashed,line width=1pt] (14,0) -- (14,2.5);
\draw [dashed,line width=1pt] (4,0) -- (4,6.9);
\draw [line width=1pt] (20,0) -- (20,1);
\draw [line width=1pt] (8,0) -- (8,4.85);
\draw [line width=1pt] (0,0) -- (20,0);
\draw[color=black, font =\small ] (1.95,4) node {$\leq \epsilon+\gamma$};
\draw[color=black, font =\small ] (2.3,3) node {$ + M\delta$};
\draw[color=black, font = \small] (11,1.5) node {$\leq \epsilon+\gamma + M\delta$};
\draw[color=black, font = \small] (6,3) node {$\leq \gamma $};
\draw[color=black, font = \small] (17,1) node {$\leq \gamma $};
\begin{small}
\draw [fill=xdxdff] (0,0) circle (8pt);
\draw[color=xdxdff] (0.16,-0.78) node {$x_{k}$};
\draw [fill=xdxdff] (20,0) circle (8pt);
\draw[color=xdxdff] (20.16,-0.78) node {$x_{k+2}$};
\draw [fill=xdxdff] (8,0) circle (8pt);
\draw[color=xdxdff] (8.16,-0.78) node {$x_{k+1}$};
\draw[color=black] (13.76,-0.78) node {$\frac{x_{k+1}+x_{k+2}}{2}$};
\draw[color=black] (4.24,-0.78) node {$\frac{x_k+x_{k+1}}{2}$};
\draw[color=red] (16,3) node {$f(z)$};

\end{small}
\draw [red,thick] plot [smooth,samples=200, tension=1] coordinates { 
   (0,9.35) (10,4) (20,1)};
\end{tikzpicture}
    \caption{Bounds on the utility of candidate $k+1$ in Theorem~\ref{thm:general-lb}.}
    \label{fig:general-lb}
\end{figure}
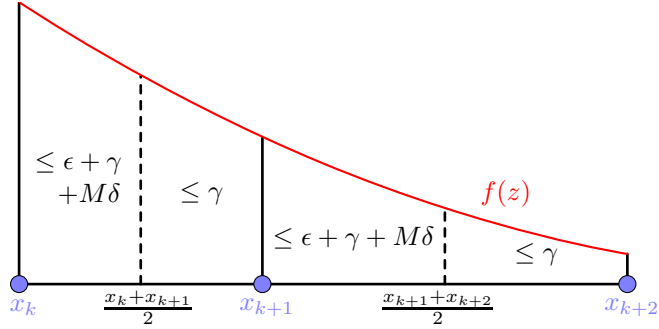

    For the expression on the right, $U_k^R(X) > \epsilon + \gamma + M \delta$ by assumption. Also, $U_{k+2}^L(X') = F(x_{k+2}) - F((x_{k+2} + x_{k+1}')/2)$ $\le F(x_{k+2}) - F(x_{k+2}/2)$ $\le \gamma$ by the properties of $f$. Hence, $U_{k+1}(X') - U_{k+1}(X) > \epsilon$, and $X$ is not an $\epsilon$-equilibrium. Hence for all candidates $k \le m-1$, $U_k^R(X) \le \epsilon + \gamma + M \delta$.

    Thus for all candidates $k \in \{2, \ldots, m-1\}$, the total utility is $U_k(X) = U_k^L(X) + U_k^R(X)$ $\le \epsilon + 2 \gamma + M \delta$. It follows then that for candidates $1$ and $k$, $U_1^L(X)$ and $U_k^R(X)$ are at most $\le 2\epsilon + 2 \gamma + 2 M \delta$.

    Finally, note that for any $k \le m-1$, $F(x_{k+1}) - F(x_k)$ $= U_k^R(X) + U_{k+1}^L(X)$ $\le \epsilon + 2\gamma + M \delta$. Thus,

    \begin{align*}
        1 & = F(x_1) + \sum_{k=1}^{m-1} \left(F(x_{k+1}) - F(x_k)\right) + \left(1-F(x_k)\right) \\
            & \le (2 (\epsilon + \gamma + M \delta) + (m-1) (\epsilon + 2 \gamma + 2 M \delta) + \\
            &\hspace{5.5cm}2 (\epsilon + \gamma + M \delta)) \\
            & = (m+3) \epsilon + 2(m+1) (\gamma + M \delta)
    \end{align*}

    \noindent and hence, as $\gamma, \delta \rightarrow 0$, $\epsilon \ge 1/(m+3)$. This completes the proof.
    \end{proof}

\section{Approximate Equilibria for 3 Candidates in a Variant}

Our previous results require that two candidates can be arbitrarily close, but cannot occupy the same location. In effect, even for candidates that are nearly identical, voters are able to perfectly differentiate between the two. It is natural to wonder if our results hold in the limit. That is, if we relax this assumption, and allow multiple candidates to occupy the same position (and thus voters fail to differentiate between these candidates), do our results hold? While it may appear that upper bounds on approximate equilibria in the original model should hold for this model as well, due to the larger space of strategies allowed, we do not know if this is indeed the case. In this section, we however give some evidence that this is indeed true. In particular, for the case of 3 candidates, we show that for any distribution, we can obtain a $1/7$-equilibrium, and this is tight: there exist distributions for which we cannot do better than a $1/7$-equilibrium.

A slight redefinition of utilities is required in this model. We specify that if there are $r$ candidates at a given location $x$, and $x$ is closer to $\gamma$ voters than any other candidate, then each candidate at $x$ gets $\gamma/r$ votes. Formally, given a strategy profile $X = (x_1, \ldots, x_k)$, where $x_1 \le \ldots \le x_k$, say $x_{i-1} < x_i = \ldots = x_j < x_{j+1}$. Then candidates $i, \ldots, j$ get utility 

\[
\frac{F\left((x_{j+1} + x_j)/2\right) - F\left((x_i + x_{i-1})/2\right)}{j-i+1} \, . 
\]
The definition of $\epsilon$-equilibrium is also modified appropriately.
\begin{definition}[$\epsilon$-equilibrium]
    Given an $\epsilon \ge 0$, $X = (x_1,x_2,...,x_m)$ is an $\epsilon$-equilibrium if for any candidate $i\in[m]$, and any location $x_i'\in[0,1]$,
    \[
        U_i(X')-U_i(X)\le\epsilon
    \]
    where $X'=(x_i',X_{-i})$.
\end{definition}

\begin{theorem}\label{thm:1/7ub}
    For 3 candidates, if we allow multiple candidates to choose the same locations, then given a distribution $f$ of voters, there always exists a $\frac{1}{7}$-equilibrium.
\end{theorem}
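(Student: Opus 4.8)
The plan is to use co-location to \emph{split} a heavy region between two candidates; this is the extra power the model grants and is exactly what improves the bound from $1/6$ to $1/7$. The first step is to reduce the deviations that must be checked to a short menu. Against two opponents sitting at positions $a\le b$, a best response of the third candidate is one of: move just left of the leftmost opponent and capture $F(a)$; move just right of the rightmost and capture $1-F(b)$; insert into the interior gap $(a,b)$ and capture the heaviest window of width $(b-a)/2$ lying in $[a,b]$; or co-locate with an opponent and take half of that opponent's region, namely $F((a+b)/2)/2$ or $(1-F((a+b)/2))/2$. I would first prove this menu is exhaustive. Since this variant has no minimum-separation parameter $\delta$, these quantities are exact and the final guarantee is a clean $1/7$ with no $M\delta$ slack.

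For the construction I place a \emph{stacked pair} on the heavy side and a single candidate on the other. Put two candidates at a point $z_L$ and one at $z_R>z_L$; write $m=(z_L+z_R)/2$, $P=F(m)$ (the mass the pair jointly owns, hence $P/2$ each) and $p=F(z_L)$ (so the single owns $1-P$). Using Cut I would choose $z_L=\mathrm{Cut}(0,3/7)$ and then $z_R$ so that $F(m)=4/7$, i.e.\ $P=4/7$ and $p=3/7$. These values are forced by balancing the three dangerous deviations: a pair member leaving to grab the extreme below $z_L$ gains $p-P/2$; the single cutting in just above the pair to seize all of $[z_L,1]$ gains $(1-p)-(1-P)=P-p$; and a pair member grabbing the opposite extreme above $z_R$ gains at most $(1-P)-P/2=1-\tfrac32P$. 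Setting $p-P/2=P-p=1-\tfrac32P$ yields $p=3P/4$ and then $P=4/7$, $p=3/7$, $\epsilon=1/7$. One then checks that the remaining menu items — interior windows, stacking onto the single, and the single moving left — give strictly less.

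This computation is transparent for left-heavy densities such as the rapidly decreasing density of Theorem~\ref{magicfunction} that drives the matching lower bound (there $U_k^L\approx0$ and the interior window $[z_L,m]$ automatically has mass $P-p=1/7$). For a general density two further things must be arranged. First, the configuration must be \emph{feasible}: $z_R=2m-z_L$ must not exceed $1$, equivalently $F((z_L+1)/2)\ge 4/7$; if the upper half of $[0,1]$ is too light this fails, and I would instead use the mirror configuration (pair on the right), selecting between the two by comparing the sides. Second, the interior-window deviation must be controlled: a pair member inserting into $(z_L,z_R)$ can grab the heaviest window of width $(z_R-z_L)/2$, and for an adversarially concentrated density this could exceed $P/2+1/7$.

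The main obstacle is precisely this interior-window term together with making the dichotomy exhaustive. When $[z_L,z_R]$ carries mass concentrated in a short sub-interval, the clean $4/7$–$3/7$ split no longer suffices, and one must either reposition so the offending window is halved or fall back to a different configuration in the family (three distinct, or a re-stacked pair). Establishing that \emph{some} member of this family always meets every menu bound simultaneously, and that the required thresholds are always realizable by Cut queries, is where the real care is needed; the per-configuration deviation estimates themselves are routine perturbation arguments of the kind already seen in Claims~\ref{mid}--\ref{midtot} and Theorem~\ref{thm:ub-1/6}.
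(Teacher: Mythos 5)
Your construction is genuinely different from the paper's: you anchor everything on a co-located pair at the $3/7$-quantile facing a single opponent, whereas the paper starts from three \emph{distinct} positions (extremes at the $2/7$- and $5/7$-quantiles, the middle candidate at the vote-maximizing interior point) and invokes co-location only in one subcase. Your balancing computation $p-P/2=P-p=1-\tfrac{3}{2}P$ giving $P=4/7$, $p=3/7$ is correct and does explain where $1/7$ comes from. However, the proposal has a genuine gap: the two difficulties you defer at the end are not finishing touches --- they are the entire content of the theorem --- and your proposed fallbacks do not close them. Take feasibility first. Your configuration needs $F((z_L+1)/2)\ge 4/7$ with $F(z_L)=3/7$, and the mirror needs $F(z_R'/2)\le 3/7$ with $F(z_R')=4/7$. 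Both can fail simultaneously: put mass $3/7$ uniformly on $[0,0.3]$, mass $1/14$ on $[0.3,0.34]$, mass $1/14$ on $[0.65,0.68]$, and mass $3/7$ on $[0.68,1]$. Then $z_L=0.3$ and every admissible midpoint $(z_L+z_R)/2\le 0.65$ has $F\le 1/2<4/7$, so no valid $z_R$ exists; and $z_R'=0.68$ with every admissible midpoint $\ge 0.34$ having $F\ge 1/2>3/7$, so the mirror fails too. For such densities you must leave your two-configuration family entirely, and you have not said which configuration replaces it or why it works.

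The second gap is the interior-window deviation, which you correctly identify but do not resolve: if the mass of $(m,z_R)$ is concentrated near a point $q$, a pair member deviating to $x'\in(2q-z_R,\,2q-z_L)$ captures $q$, and since $F(z_R)-F(m)$ can approach $3/7$ while his current share is $P/2=2/7$, the gain can approach $2/7$. Fixing this requires exactly the kind of repositioning and re-case-splitting that constitutes the paper's proof: there the argument conditions on whether the best interior position gives the middle candidate more or less than $2/7$ of the vote, then on which side of him is heavy, and re-solves for the positions in each branch (introducing the $3/7$-quantile and a co-located pair only when forced to). That case analysis, together with a proof that your four-item deviation menu is exhaustive (which you also only promise), is precisely what is missing. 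As written, the argument establishes the $1/7$ bound only for well-behaved densities such as the monotone one driving the matching lower bound, not for an arbitrary distribution $f$.
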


\begin{proof}
    For the proof, we define $\delta > 0$ as a small constant, so that $\delta M<10^{-3}$. We will actually show a $\frac{1}{7} + M \delta$-equilibrium, and then take $\delta\to0$. We define three key points on the interval: $z_{2/7} = \text{Cut}(0,2/7)$, $z_{5/7} = \text{Cut}(0,5/7)$, and $z_{2,\max}$ is the point between $(z_{2/7}+\delta)$ and $(z_{5/7}-\delta)$ with maximum votes. Thus $z_{2,\max} = \arg \max_{z \in [z_{2/7}+\delta, z_{5/7}-\delta]} U_2((z_{2/7}, z, z_{5/7}))$. The strategy profile $X^0 = (z_{2/7}, z_{2,\max}, z_{5/7})$ is our initial strategy profile, which we will modify to get a $1/7$-equilibrium. 
    
    We first show that in this initial strategy profile, candidate 2 gets at least $1/7$ votes.
    
    \begin{clm}
        $U_2(X^0) \ge 1/7$ \, .
    \end{clm}
    \begin{proof}
        Note that $F(z_{5/7}) - F(z_{2/7}) = 3/7$, hence, placing candidate 2 either just after $z_{2/7}$ or just before $z_{5/7}$ should give him at least half of these votes. That is, either $F((z_{2/7} + z_{5/7})/2) - F(z_{2/7}) \ge 1.5/7$ or $F(z_{5/7}) - F((z_{2/7} + z_{5/7})/2) \ge 1.5/7$. In either case, since $z_{2,\max}$ is the position in $[z_{2/7}+\delta, z_{5/7}-\delta]$ where candidate 2 gets maximum votes, he must get at least $1/7$ votes at $z_{2,\max}$.
    \end{proof}
    
    We will consider two cases: the first, where $U_2(X^0) \le 2/7$, and the second, where $U_2(X^0) > 2/7$.

    \paragraph*{Case I: $U_2(X^0) \le 2/7$.} Here, we consider two subcases. For the easy case, suppose both $U_2^L(X^0) \le 1/7$ and $U_2^R(X^0) \le 1/7$. Then we claim that in fact $X^0 = (z_{2/7}, z_{2,\max}, z_{5/7})$ is a $1/7$-equilibrium. 
    \begin{enumerate}
        \item For candidate 2, note that he cannot improve his votes in the interval $(z_{2/7},z_{5/7})$ by more than $\delta M$. At any position before $z_{2/7}$ or after $z_{5/7}$, he would get at most $2/7$ votes, and could thus improve by at most $1/7$. At the positions $z_{2/7}$ and $z_{5/7}$, he would get at most $(U_1^R(X^0)+2/7)/2\le$ $(U_2(X^0)+\delta M + 2/7)/2$ $< 1/7 + U_2(X^0)$.
        \item For candidates 1 and 3, since $U_2^L(X^0) \le 1/7$ and $U_2^R(X^0) \le 1/7$, they can at best increase their votes by $1/7$.
    \end{enumerate}

    For the second subcase, either $U_2^L(X^0) > 1/7$ or $U_2^R(X^0) > 1/7$. Clearly, both cannot hold, since we assume  $U_2(X^0) \le 2/7$. Without loss of generality, assume $U_2^R(X^0) > 1/7$ and $U_2^R(X^0) \le 1/7$. Then by continuity of the function $F$, there must exist some $\eta > 0$ so that at the strategy profile $X' = (z_{2/7}, z_{2,\max} + \eta, z_{5/7})$, candidate 2 gets exactly $1/7$ votes from the right. That is, $U_2^R(X') = 1/7$. Further, $U_2^L(X') \le 1/7$ (because he receives a maximum of $2/7$ votes at $z_{2,\max}$), and $U_1^L(X') = U_3^R(X') = 2/7$. We claim that in this case, $X'$ is a $1/7$-equilibrium.
    \begin{enumerate}
        \item For candidate 2, in the interval $(z_{2/7},z_{5/7})$ he gets at most $2/7 + M \delta$ votes (since at $z_{2,\max}$ he gets at most $2/7$ votes), and hence cannot improve by more than $1/7+\delta M$. Similarly, before $z_{2/7}$ and after $z_{5/7}$ he gets at most $2/7$ votes. At the positions $z_{2/7}$ and $z_{5/7}$, he would get at most $(U_2(X^0) + 2/7 +\delta M)/2$ $\le 1/7 + U_2(X^0)/2 + \delta M/2$ $\le 2/7 + \delta M/2$.
        \item For candidates 1 and 3, as previously, since $U_2^L(X') \le 1/7$ and $U_2^R(X') \le 1/7$, they can at best increase their votes by $1/7$.
    \end{enumerate}

    Thus, if $U_2(X^0) \le 2/7$, either $X^0$ or $X'$ is a $(1/7 + M \delta)$-equilibrium.

    \paragraph*{Case II: $U_2(X^0)>2/7$.} Recall that $X^0 = (z_{2/7}, z_{2,\max}, z_{5/7})$. We will consider three subcases here. For the first subcase, assume that for some location $x_2' \in [z_{2/7}+ \delta, z_{5/7} - \delta]$, in the strategy profile $X' = (z_{2/7}, x_2', z_{5/7})$, candidate 2 gets at least $1/7$ votes from both the left and the right. That is, both $U_2^L(X') \ge 1/7$ and $U_2^R(X') \ge 1/7$. Then there exist $\eta_1 \ge 0$, $\eta_2 \ge 0$ so that for the strategy profile $X'' = (z_{2/7} + \eta_1, z_{2,\max}, z_{5/7} - \eta_2)$, candidate 2 gets exactly $1/7$ from each side. That is, $U_2^L(X'') = U_2^R(X'') = 1/7$. We claim that $X''$ is a $1/7$-equilibrium. Note that each candidate in this profile gets at least $2/7$ votes. Further, each interval $[0, z_{2/7} + \eta_1]$, $[z_{2/7} + \eta_1, z_{5/7} - \eta_2]$, and $[z_{5/7} - \eta_2, 1]$ contains at least $2/7$ votes, and hence each of these intervals has at most $3/7$ votes. Figure~\ref{fig:variant-1} shows the positions of the candidates and the voters in each interval.

\begin{figure}[H]
    \centering
    \begin{tikzpicture}[line cap=round,line join=round,>=triangle 45,x=1cm,y=1cm,scale=0.382]
    \clip(-1,-3) rectangle (22,4);
\draw [line width=1pt, color = red] (3,-0.5) -- (3,0.5);
\draw [dashed,line width=1pt] (13.5,-1) -- (13.5,1);
\draw [dashed,line width=1pt] (7.5,-1) -- (7.5,1);
\draw [line width=1pt] (21,-0.3) -- (21,0.3);
\draw [line width=1pt] (0,-0.3) -- (0,0.3);
\draw [line width=1pt, color = red] (19,-0.5) -- (19,0.5);
\draw [line width=1pt] (0,0) -- (21,0);
\draw [line width=0.3pt,<-] (0,2) -- (1.6,2);
\draw [line width=0.3pt,->] (3.5,2) -- (5,2);
\draw [line width=0.3pt,<-] (16.7,2) -- (18,2);
\draw [line width=0.3pt,->] (19.8,2) -- (21,2);
\draw [line width=0.3pt,<->] (7.5,0.8) -- (10,0.8);
\draw [line width=0.3pt,<->] (10.1,0.8) -- (13.4,0.8);
\begin{small}
\draw [fill=xdxdff] (5,0) circle (8pt);
\draw[color=red] (3.16,-1) node {$z_{2/7}$};
\draw [fill=xdxdff] (17,0) circle (8pt);
\draw[color=red] (19.16,-1) node {$z_{5/7}$};
\draw [fill=xdxdff] (10,0) circle (8pt);
\draw[color=xdxdff] (10.16,-0.8) node {$x_2'$};
\draw[color=black] (0,-0.8) node {$0$};
\draw[color=black] (21,-0.8) node {$1$};
\draw[color=black] (2.55,2) node {$\leq \frac{3}{7}$};
\draw[color=black] (8.8,1.6) node {$\frac{1}{7}$};
\draw[color=black] (12,1.6) node {$\frac{1}{7}$};
\draw[color=black] (19,2) node {$\le \frac{3}{7}$};
\end{small}
\end{tikzpicture}
    \caption{Figure for the case where for some location $x_2' \in (z_{2/7},z_{5/7})$, the candidate gets at least $1/7$ votes on either side. The blue circles denote the candidate locations in the strategy profile $X''$, and the dashed lines denote boundaries.}
    \label{fig:variant-1}
\end{figure}
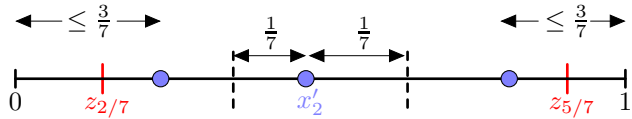
    
    \begin{enumerate}
        \item For candidate 2, since he already gets exactly $2/7$ votes and each interval above contains at most $3/7$ votes, he cannot increase his votes by more than $1/7$.
        \item For candidates 1 and 3, they already receive at least $2/7$ votes. Since $U_2^L(X'') = U_2^R(X'') = 1/7$, and each interval above contains at most $3/7$ votes, they can increase their votes by at most $1/7$.
    \end{enumerate}

    This concludes the first subcase. For the second and third subcases, since $U_2(X^0) > 2/7$, we assume without loss of generality that $U_2^L(X^0) < 1/7$ and $U_2^R(X^0) \ge 1/7$. We define the point $z_{3/7} = \text{Cut}(0,3/7)$ that will be useful. We consider two subcases: (i) $z_{3/7} \le z_{2, \max} $, and (ii) $z_{3/7} > z_{2,\max}$, and show how to get a $1/7$-equilibrium in each case.

    For the first subcase, we thus have $z_{3/7} \le z_{2, \max} $, and also that $U_2^L(X^0) < 1/7$, $U_2^R(X^0) \ge 1/7$, and $U_2(X^0) > 2/7$. Then since $z_{3/7} \le z_{2, \max} $ and $U_2^R(X^0) \ge 1/7$, let $x_2'$ be the first point in $[z_{3/7}, z_{2, \max}]$ where candidate 2 gets at least $1/7$ votes from the right, i.e., $F((x_2' + z_{5/7})/2) - F(x_2') \ge 1/7$.

    Let's first assume that $x_2' = z_{3/7}$. Then  $F((z_{3/7} + z_{5/7})/2) - F(z_{3/7}) \ge 1/7$. There exists $\eta \ge 0$ so that $F((z_{3/7} + z_{5/7} - \eta)/2) - F(z_{3/7}) = 1/7$. Then consider the strategy profile $X' = (z_{3/7}, z_{3/7}, z_{5/7} - \eta)$. Here, $U_1(X') = U_2(X') = 1/2 \times 4/7 = 2/7$, and also $U_3(X') \ge 2/7$. It can be easily checked that $X'$ is a $1/7$-equilibrium. 

    If $x_2' > z_{3/7}$, then clearly $F((z_{3/7} + z_{5/7})/2) - F(z_{3/7}) < 1/7$. By continuity of $F$, it must be the case that $F((x_2' + z_{5/7})/2) - F(x_2') = 1/7$. Then consider the strategy profile $X' = (z_{2/7}, x_2', z_{5/7})$. Note that for candidate 2, $U_2^R(X') = 1/7$, and hence $U_2^L(X') < 1/7$, else we would be in the first subcase of Case II. Further, both candidates 1 and 3 get at least $2/7$ votes each. Clearly, candidates 1 and 3 cannot increase their votes by more than $1/7$. We will now show that candidate 2 also cannot increase his votes by more than $1/7+\delta M$. We first show that at $x_2'$, candidate 2 loses at most $1/7$ votes compared to the location $z_{2,\max}$ which immediately tells us that candidate 2 cannot increase his votes by more than $1/7+\delta M$. The candidate positions and voters for this case are depicted in Figure~\ref{fig:variant2}.

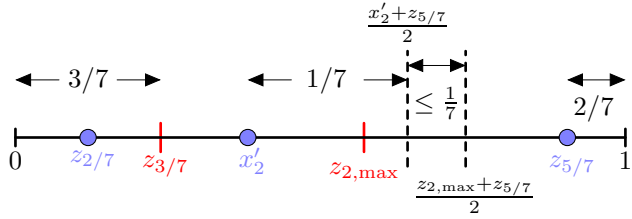
\begin{figure}[H]
    \centering
    \begin{tikzpicture}[line cap=round,line join=round,>=triangle 45,x=1cm,y=1cm,scale=0.382]
    \clip(-1,-3) rectangle (22,5);
\draw [line width=1pt, color = red] (5,-0.5) -- (5,0.5);
\draw [dashed,line width=1pt] (13.5,-1) -- (13.5,3);
\draw [dashed,line width=1pt] (15.5,-1) -- (15.5,3);
\draw [line width=1pt] (21,-0.3) -- (21,0.3);
\draw [line width=1pt] (0,-0.3) -- (0,0.3);
\draw [line width=1pt, color = red] (12,-0.5) -- (12,0.5);
\draw [line width=1pt] (0,0) -- (21,0);
\draw [line width=0.3pt,<-] (0,2) -- (1.5,2);
\draw [line width=0.3pt,->] (3.5,2) -- (5,2);
\draw [line width=0.3pt,<->] (19,2) -- (21,2);
\draw [line width=0.3pt,<->] (13.5,2.5) -- (15.5,2.5);
\draw [line width=0.3pt,<-] (8,2) -- (9.5,2);
\draw [line width=0.3pt,->] (12,2) -- (13.5,2);
\begin{small}
\draw [fill=xdxdff] (2.5,0) circle (8pt);
\draw[color=xdxdff] (2.66,-0.8) node {$z_{2/7}$};
\draw[color=red] (5.16,-1) node {$z_{3/7}$};
\draw [fill=xdxdff] (19,0) circle (8pt);
\draw[color=xdxdff] (19.16,-1) node {$z_{5/7}$};
\draw [fill=xdxdff] (8,0) circle (8pt);
\draw[color=xdxdff] (8.16,-0.8) node {$x_2'$};
\draw[color=black] (0,-0.8) node {$0$};
\draw[color=black] (21,-0.8) node {$1$};
\draw[color=black] (2.5,2) node {$3/7$};
\draw[color=black] (10.7,2) node {$1/7$};
\draw[color=black] (14.5,1.2) node {$\leq\frac{1}{7}$};
\draw[color=red] (12,-1.2) node {$z_{2,\max}$};
\draw[color=black] (19.9,1) node {$2/7$};
\draw[color=black] (13.5,4) node {$\frac{x_2' + z_{5/7}}{2}$};
\draw[color=black] (15.8,-2) node {$\frac{z_{2,\max} + z_{5/7}}{2}$};
\end{small}
\end{tikzpicture}
    \caption{Figure for Claim~\ref{clm:u2}, depicting the candidate locations, and change in candidate $2$'s utility as he moves from $z_{2,\max}$ to $x_2'$. }
    \label{fig:variant2}
\end{figure}

    \begin{clm}
        $U_2(X^0) - U_2(X') \le 1/7$.
        \label{clm:u2}
    \end{clm}
    \begin{proof}
        Recall $X^0 = (z_{2/7}, z_{2,\max}, z_{5/7})$, $X' = (z_{2/7}, x_2', z_{5/7})$, and $x_2' \le z_{2,\max}$. Hence in shifting to $x_2'$, candidate 2 gains votes on the left and loses votes on the right. Thus:

        \begin{align*}
        U_2(X^0) - U_2(X') &\le (F((z_{2,\max} + z_{5/7})/2) - \\
        &\hspace{3cm}F((x_2' + z_{5/7})/2)) \, .
        \end{align*}

        \noindent Since $z_{2,\max} \le z_{5/7}$,

          \[
        U_2(X^0) - U_2(X') \le F(z_{5/7}) - F((x_2' + z_{5/7})/2)\, .
        \]

        We now show that the expression on the left is at most $1/7$, to complete the proof. Note that $x_2' \ge z_{3/7}$, and $U_2^R(X') = F((x_2' + z_{5/7})/2) - F(x_2')$ $= 1/7$, and hence $F((x_2' + z_{5/7})/2) \ge 4/7$. It follows by definition of $z_{5/7}$ that $F(z_{5/7}) - F((x_2' + z_{5/7})/2) \le 1/7$, as required.
    \end{proof}
    
    For the last subcase, we thus have $z_{3/7} > z_{2, \max} $, and also that $U_2^L(X^0) < 1/7$, $U_2^R(X^0) \ge 1/7$, and $U_2(X^0) > 2/7$. Then since $z_{3/7} > z_{2, \max} $ and $U_2^R(X^0) \ge 1/7$, let $x_2'$ be the last point in $[z_{2, \max}, z_{3/7}]$ where candidate 2 gets at least $1/7$ votes from the right, i.e., $F((x_2' + z_{5/7})/2) - F(x_2') \ge 1/7$. This subcase is then similar to the previous subcase, where we consider either $x_2' = z_{3/7}$, or $x_2' < z_{3/7}$, and a $1/7$-equilibrium can be found accordingly.
\end{proof}

Note that the proof of Theorem \ref{thm:1/7ub} only shows the existence of a $1/7$-approximate equilibrium. However, to obtain a polynomial time algorithm that achieves a $1/7$-approximate equilibrium, we require a more robust oracle. This oracle should perform the following tasks: given an interval $[x,y]$, it returns a sub-interval $[x',y'] \subset [x,y]$ such that (i) $y' - x' = (y-x)/2$, and (ii) $F(y')-F(x')$ is maximized among all such sub-intervals. The location $z_{2,\max}$ as defined in the proof of Theorem \ref{thm:1/7ub} corresponds to the midpoint $(x'+y')/2$ of such an interval. Unfortunately, it remains unclear how to identify such a sub-interval in polynomial time using only Cut and Eval queries. 

The subsequent theorem asserts the tightness of this $1/7$-approximate equilibrium.

\begin{theorem}\label{thm:lb31}
    For 3 candidates, if we allow multiple candidates to choose the same location, then there exists a distribution $f$ of voters such that for any $\epsilon$-equilibrium, $\epsilon\ge\frac{1}{7}$.
\end{theorem}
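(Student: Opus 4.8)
The plan is to exhibit one explicit voter distribution and argue that \emph{every} strategy profile---including those in which two or all three candidates share a location---admits a deviation of gain at least $\frac{1}{7} - o(1)$ as $\gamma,\delta \to 0$; this is exactly the statement that no $\epsilon$-equilibrium with $\epsilon < 1/7$ exists. Guided by the tight configurations in the proof of Theorem~\ref{thm:1/7ub} (where some candidate holds exactly $2/7$ of the vote while a reachable region carries $3/7$), I would take a symmetric distribution built from three mass concentrations of total weights $\tfrac{2}{7}$, $\tfrac{3}{7}$, $\tfrac{2}{7}$ from left to right, each concentration shaped using the rapidly-decreasing function of Theorem~\ref{magicfunction} (reflected as needed so that a candidate can encroach on a concentration only from its ``near'' side). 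The purpose of the magic-function shape is that no single candidate can seize an entire concentration \emph{together with} a neighbour's share by a local shift---were each concentration atomic, a candidate could grab a whole atom, collapsing the gap to $0$ and yielding an exact equilibrium. The shape is instead tuned so that from each side a candidate can peel off precisely enough mass to make the best local deviation worth $\tfrac{1}{7}$, so that the intended profile (one candidate per concentration, holding $\tfrac{2}{7},\tfrac{3}{7},\tfrac{2}{7}$) has gap exactly $\tfrac{1}{7}$, matching the upper bound.

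With the distribution fixed, the argument is a case analysis on the coincidence pattern of the three candidates. When all three coincide at a point $q$, each holds $\tfrac{1}{3}$; since one side of $q$ carries at least $\tfrac{1}{2}$ of the mass, a single candidate can shift by $\delta$ to that side (leaving two at $q$) and capture essentially all of it, gaining at least $\tfrac{1}{2} - \tfrac{1}{3} = \tfrac{1}{6} > \tfrac{1}{7}$; this case is immediate and distribution-independent. When all three occupy distinct locations, I would use a counting argument over the three Voronoi regions: their masses sum to $1$ and each is capped near $\tfrac{3}{7}$ by the construction, so some candidate holds at most $\tfrac{2}{7}$ while an adjacent concentration reachable by a $\delta$-shift carries $\tfrac{3}{7}$, giving that candidate a gain of $\tfrac{1}{7} - o(1)$. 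The delicate case is when exactly two candidates coincide at a point $q$ with the third at $s$: writing $b$ for the midpoint $(q+s)/2$, the pair split the mass $F(b)$ on their joint side, while one of them can deviate just left of $q$ to obtain $\approx F(q)$, or just right of $q$ to obtain $\approx F(b) - F(q)$, and the lone candidate can instead encroach from his side. I would show that the chosen weights and shapes guarantee that at least one of these deviations yields $\tfrac{1}{7} - o(1)$ for every admissible pair $(q,s)$.

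The main obstacle is precisely this two-coincident case, since it is the mechanism by which the variant model achieves a better approximation ($\tfrac{1}{7}$ rather than $\tfrac{1}{6}$), and it is therefore where the distribution must be engineered most carefully: the gains $F(q) - \tfrac{1}{2}F(b)$ and $\tfrac{1}{2}F(b) - F(q)$ depend on the distribution, so the concentration weights and the magic-function spreads must be balanced so that the sharing trick can never close the gap below $\tfrac{1}{7}$, no matter where the coincidence occurs. Once all three cases are verified, taking $\gamma,\delta \to 0$ yields $\epsilon \ge \tfrac{1}{7}$, establishing Theorem~\ref{thm:lb31} and, together with Theorem~\ref{thm:1/7ub}, the tightness of the bound.
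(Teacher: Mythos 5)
Your high-level plan (fix one distribution, then do a case analysis over coincidence patterns, with the two-coincident case as the crux) matches the paper's, and your all-coincident case is exactly the paper's Case II. But the construction you propose is genuinely different from the paper's and, as sketched, has two concrete gaps that I do not think can be patched without essentially abandoning the symmetric three-bump idea.

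First, the paper does not use a symmetric $\tfrac27,\tfrac37,\tfrac27$ distribution. It uses the \emph{single} monotone density of Theorem~\ref{magicfunction} on all of $[0,1]$ (in the limit, essentially all mass sits at $0$), whose global property $F(y)-F(y/2)\le\gamma$ for \emph{every} $y$ is what drives the whole proof: it makes every candidate's votes beyond the midpoint toward his right neighbour negligible. In the binding case $x_1=x_2<x_3$, this yields $F(x_3)-F((x_1+x_3)/2)\le\gamma$, hence $F(x_3)-F(x_1)\le\epsilon+\gamma$, and then $1\le 3\epsilon+(\epsilon+\gamma)+3\epsilon$ gives $1/7$. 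A three-bump distribution cannot have this global property, and your sketch offers no substitute for it. Concretely, in the two-coincident case (pair at $q$, singleton at $s$, $P=F((q+s)/2)$, $Q=F(q)$, $S=F(s)$), the deviations you list (singleton to $q^\pm$, pair member to $q^-$, to $s^\pm$, or joining $s$) only force $P\le 4\epsilon$, $P\ge \tfrac12-\epsilon$, $S\le \tfrac32 P+\epsilon$ and $1-S\le \tfrac P2+\epsilon$, which together give only $\epsilon\ge 1/10$: a profile with roughly $0.3,\,0.1,\,0.3,\,0.3$ mass in $[0,q]$, $[q,\mathrm{mid}]$, $[\mathrm{mid},s]$, $[s,1]$ satisfies all of them with $\epsilon=1/10$. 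Nothing in your description of the bump shapes rules out the pair and the singleton both sitting inside the central concentration so as to realize such a split; the paper's monotone density rules it out because the mass in $[\mathrm{mid}(q,s),s]$ is at most $\gamma$ there. This is exactly the place you flag as ``delicate,'' and it is where the argument is actually missing, not merely unpolished.

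Second, your all-distinct case is wrong as stated. Three Voronoi masses summing to $1$ only guarantee that some candidate holds at most $1/3$, not at most $2/7$; a candidate with $1/3$ who can reach a region of mass $3/7$ gains only $3/7-1/3=2/21<1/7$. The paper avoids this entirely by invoking Theorem~\ref{thm:general-lb} for its distribution, which gives $1/6$ in the all-distinct case; with your distribution that theorem does not apply and you would need a separate (and currently absent) argument. Finally, note that with an asymmetric density the two orientations of the two-coincident case are not equivalent: in the paper, the pair-on-the-left case is the binding $1/7$ case while the pair-on-the-right case gives $1/6$; a symmetric construction forfeits this slack and must win the harder fight on both sides.
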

    
\begin{proof}
    Fix $\gamma>0$. We choose the distribution $f$ of the voters as per Theorem \ref{magicfunction}, so that $f$ is bounded, integrable, $\int_0^1 f(z) dz = 1$, and for any $y \in [0,1]$, $F(y) - F(y/2) \le \gamma$. Suppose $X=(x_1,x_2,x_3)$ is an $\epsilon$-equilibrium. We divide our proof into three cases according to the maximum number of candidates at the same location.\\\\
    \textbf{Case I:} Suppose $x_1<x_2<x_3$. In this case, from Theorem \ref{thm:general-lb}, we conclude that $\epsilon\ge1/6$.\\\\
    \textbf{Case II:} Suppose $x_1=x_2=x_3$. Each candidate then receives $1/3$ votes. Any candidate can get at least $1/2$ votes by deviating from that location, and hence $\epsilon\ge1/6$.\\\\

\begin{figure}[!htp]
    \centering
    \begin{tikzpicture}[line cap=round,line join=round,>=triangle 45,x=1cm,y=1cm,scale=0.4]
\clip(-1,-3) rectangle (21,14);
\draw [line width=1pt] (0,0) -- (0,14);
\draw [dashed,line width=1pt] (8,0) -- (8,5.3);
\draw [color = blue, dashed, line width=1pt] (2.2,0) -- (2.2,11.5);
\draw [line width=1pt] (10,0) -- (10,4);
\draw [line width=1pt] (6,0) -- (6,7);
\draw [line width=1pt] (0,0) -- (20,0);

\draw[color=black, font = \normalsize] (7,3) node {$\leq \epsilon $};
\draw[color=black, font = \normalsize] (9,2) node {$\leq \gamma $};
\draw[color=black, font = \normalsize] (13,1) node {$\leq 3\epsilon $};
\draw[color=black, font = \normalsize] (4,4) node {$\leq 3\epsilon $};
\begin{small}
\draw [fill=xdxdff] (6,0) circle (8pt);
\draw[color=xdxdff] (5.9,-0.78) node {$x_{1}=x_2$};
\draw [fill=xdxdff] (10,0) circle (8pt);
\draw[color=xdxdff] (10,-0.78) node {$x_{3}$};
\draw[color=black] (2.2,-0.78) node {$e^{-\frac{1}{\gamma}}$};
\draw[color=black] (0,-0.78) node {$0$};
\draw[color=red] (16,3) node {$f(z)$};

\end{small}
\draw [red,thick] plot [smooth,samples=200, tension=1] coordinates { 
   (2.2,11.8) (10,4) (20,1)};
\end{tikzpicture}
    \caption{Bounds on the utilities of candidates 1, 2, and 3 in Case III of Theorem \ref{thm:lb31}, with the midpoint of $x_1$ and $x_3$ indicated by the black dashed line.}
    \label{fig:lb31}
\end{figure}
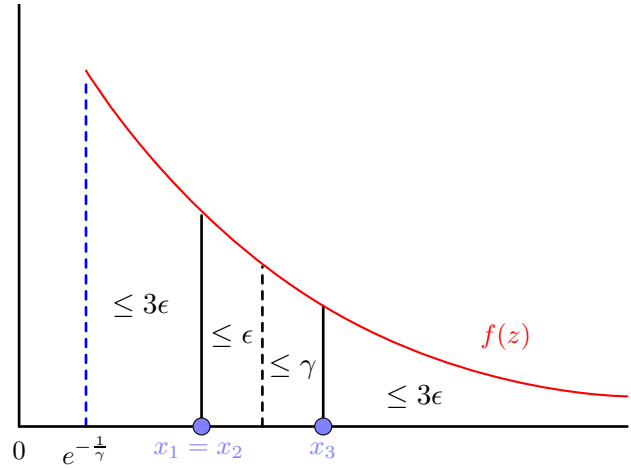
    
  \noindent  \textbf{Case III:} Suppose $x_1=x_2<x_3$. The votes in this case are depicted in Figure~\ref{fig:lb31}. Note that $F((x_1+x_3)/2)-F(x_1)\le\epsilon$, otherwise candidate 3 can shift arbitrarily close to the right of $x_1$ and can increase his utility by more than $\epsilon$. Further we have $F(x_3)-F((x_1+x_3)/2)\le\gamma$ (by definition of $f$). Therefore, $F(x_3)-F(x_1)\le \epsilon+\gamma$.

    \begin{clm}\label{clm:lb31}
        $U_1(X)\le(F(x_1)+\epsilon)/2$.
    \end{clm}
    \begin{proof}
        We bound the utility of the first candidate as follows.
        \begin{align*}
            U_1(X)&=F((x_1+x_3)/2)/2 \\
            &= (F(x_1)+F((x_1+x_3)/2)-F(x_1))/2\\
            &\le (F(x_1)+\epsilon)/2
        \end{align*}
    \end{proof}

    \begin{corollary}\label{cly:lb31}
        $F(x_1)\le 3\epsilon$.
    \end{corollary}
    \begin{proof}
        For the sake of contradiction suppose $F(x_1)=3\epsilon+\eta$ for some $\eta>0$. From Claim \ref{clm:lb31}, we conclude that $U_1(X)\le 2\epsilon + \eta/2$. Now there exists some $x_1'<x_1$ such that $F(x_1')>3\epsilon+\eta/2$. If candidate 1 deviates to $x_1'$, his utility will increase by more than $3\epsilon+\eta/2-(2\epsilon +\eta/2)=\epsilon$. This gives a contradiction.
    \end{proof}
The next two bounds are immediate.
    
    \begin{corollary}\label{cly:lb32}
        $U_1(X)\le 2\epsilon$.
    \end{corollary}
    \begin{corollary}\label{cly:lb33}
        $1-F(x_3)\le 3\epsilon$.
    \end{corollary}
    Now we can show an lower bound on $\epsilon$ as follows.
    \begin{align*}
        1 &= \int_0^1 f(z) dz\\
        &= F(x_1) + (F(x_3)-F(x_1)) + (1 - F(x_3))\\
        &\le 3\epsilon+\epsilon+\gamma+3\epsilon ~~~[\text{From Corollary~\ref{cly:lb31} and Corollary~\ref{cly:lb33}}]\\
        &=7\epsilon+\gamma
    \end{align*}
    Therefore, for $\gamma\rightarrow0,~\epsilon\ge\frac{1}{7}$.
\\\\
\textbf{Case IV:} Suppose $x_1<x_2=x_3$. Note that $F(x_2)-F((x_1+x_2)/2)\le\gamma$ (by definition of $f$). Therefore, $U_2(X)=(1-F((x_1+x_2)/2))/2\le(1-F(x_2)+\gamma)/2$. Similar to Corollaries~\ref{cly:lb31}, ~\ref{cly:lb32} and ~\ref{cly:lb33} of Case III, we can show that $1-F(x_2)\le (2\epsilon+\gamma)$, $F(x_1)\le(2\epsilon+\gamma)$, and $U_2(X)\le\epsilon+\gamma$. Notice that $F((x_1+x_2)/2)-F(x_1)\le(2\epsilon+\gamma)$ otherwise candidate 2 can shift arbitrarily close to the right of $x_1$ and can increase his utility by more than $\epsilon$. The proof of the theorem can be concluded as follows.
\begin{align*}
        1 &= \int_0^1 f(z) dz\\
        &= F(x_1) + (F(x_2)-F(x_1)) + (1 - F(x_2))\\
        &\le (2\epsilon+\gamma)+(2\epsilon+2\gamma)+(2\epsilon+\gamma) ~~~[\text{From Corollary \ref{cly:lb31} and \ref{cly:lb33}}]\\
        &=6\epsilon+4\gamma
    \end{align*}
    Therefore, for $\gamma\rightarrow0,~\epsilon\ge\frac{1}{6}$.
\end{proof}

\section{Conclusion}

In the basic model of spatial competition in the unit interval, our work gives nearly tight best-case and worst-case bounds on how close we can get to equilibria, quantifying the instability due to lack of equilibria. As in prior work, it would be interesting to see if these bounds are robust to slight changes in the model, such as if we allow multiple candidates to occupy the same location. Another assumption that we would like to relax is that of inelastic demand, which requires that each voter must vote, even if the nearest candidate is very far from his location. Given the emphasis placed in elections on ``turning out the vote,'' it seems to us that voter abstention or apathy is a crucial aspect that should be captured. More generally, we believe that quantifying instability in this manner may be a useful and interesting line of research. 

In our model, we assume that the utilities of the candidates are normalised, and we study additive approximations to equilibria. We believe this is well-motivated. For normalised utilities, additive approximations to equilibria are more robust than multiplicative approximations. For example, in our model, consider two candidate strategy profiles $X$, $X'$, so that for any candidate $i$, $|x_i-x_i'| \le \delta$. Then if $X$ is an additive $\epsilon$-approximate equilibrium, $X'$ is an additive $(\epsilon + 4M \delta)$-equilibrium, where $M \ge f(z)$ for all $z \in [0,1]$. However if $X$ is a multiplicative $\epsilon$-approximate equilibrium, we cannot give any similar guarantees for $X'$. Further, lower bounds on additive approximations directly imply lower bounds on multiplicative approximations. However obtaining multiplicative bounds on approximate equilibria is clearly an important question, which remains open for future work.

\bibliographystyle{plain}
\bibliography{hotelling-approxeq-arxiv}
\end{document}